\def \trans{^{\scriptscriptstyle{\intercal}}}
\definecolor{dkgreen}{rgb}{0,0.6,0}
\definecolor{gray}{rgb}{0.5,0.5,0.5}
\definecolor{mauve}{rgb}{0.58,0,0.82}
\tiny\color{gray},
 \def \d{\mathrm{d}} 
\newcommand{\R}{{\bf R}}
\newtheorem{thm}{Theorem}[section]
\newtheorem{assum}[thm]{Assumptions}
\newtheorem{prop}[thm]{Proposition}
\newtheorem{rem}[thm]{Remark}
\newcommand{\bm}{\bibitem}
\newcommand{\be}{\begin{equation}}
\newcommand{\ee}{\end{equation}}
\newcommand{\bea}{\begin{eqnarray}}
\newcommand{\bes}{\begin{subEquations}}
\newcommand{\ees}{\end{subEquations}}
\newcommand{\bgt}{\begin{gather}}
\newcommand{\egt}{\begin{gather}}
\newcommand{\eea}{\end{eqnarray}}
\newcommand{\beaa}{\begin{eqnarray*}}
\newcommand{\eeaa}{\end{eqnarray*}}
\newcommand{\EE}{{\mathbb E}}
\newcommand{\no}{\noindent}
\def \R{\mathbb{R}}
\def \E{\mathbb{E}}
\def \T{\mathbb{T}}
\def \Ac{{\cal A}}
\def \Pc{{\cal P}}
\def \Oc{{\cal O}}
\def \Tc{{\cal T}}
\def \Vc{{\cal V}}
\def \Xc{{\cal X}}
\def\red#1{{\color{red}#1}}
\def \mrJ{\mathrm{J}} 
\def \mra{\mathrm{a}}
\def \borho{\boldsymbol{\rho}}
\def \boa{\boldsymbol{a}}
\def \bolx{\boldsymbol{x}}
\numberwithin{equation}{section} 
\begin{document}

\title{Policy gradient learning methods for stochastic control with exit time and  applications to  share repurchase pricing   
}
\author{Mohamed HAMDOUCHE\footnote{LPSM, Université Paris Cité, hamdouche at lpsm.paris} \quad  
Pierre HENRY-LABORDERE\footnote{Qube Research and Technologies, pierre.henrylabordere at qube-rt.com} \quad  Huy\^en PHAM\footnote{LPSM, Université Paris Cité, pham at lpsm.paris}}

\date{}

\maketitle

\begin{abstract}  
We develop policy gradients methods for stochastic control with exit time in a model-free setting. 
We propose two types of algorithms for learning either directly the optimal policy or by learning alternately the value function (critic) and the optimal control (actor). 
The use of randomized policies is crucial for overcoming notably the issue related to the exit time in the gradient computation. We demonstrate the effectiveness of our approach by implementing our numerical schemes in  the application to the problem of share repurchase pricing. Our results 
show that the proposed policy gradient methods  outperform PDE or other neural networks techniques in a model-based setting. Furthermore, our algorithms are flexible enough to incorporate 
realistic market conditions like e.g.  price impact or transaction costs. 
\end{abstract}

\section{Introduction}

Let us  consider a controlled Markov  state process $X$ $=$ $(X^\alpha)_t$ valued in $\Xc$ $\subset$ $\R^d$ with a control process $\alpha$ $=$ $(\alpha_t)$ valued in $A$ $\subset$ $\R^m$.  Given an open set $\Oc$ of $\Xc$, 
we denote by $\tau$ $=$ $\tau^\alpha$ the exit time of the domain $\Oc$ before a terminal horizon 
$T$ $<$ $\infty$, i.e., 
\begin{align}
\tau &= \;  \inf\{ t \geq 0: X_t \notin \Oc \} \wedge T, 
\end{align}
with the usual convention that $\inf\emptyset$ $=$ $\infty$. 
The objective is then to maximize over control process $\alpha$ a criterion in the form
\begin{align} \label{Mayer}
J(\alpha) &= \; \E\big[ g(X_\tau^\alpha) \big], \quad \rightarrow \quad V_0 \; = \; \sup_\alpha J(\alpha), 
\end{align}
for some terminal reward function $g$ on $\R^d$. 
In typical examples,  $X$ is  modelled by a controlled diffusion process as
\begin{align} \label{diffX} 
\d X_t &= \; \mu(X_t,\alpha_t) \d t + \sigma(X_t,\alpha_t) \d W_t, 
\end{align}
and we can also consider jump-diffusion processes, which is  in particular relevant for  
insurance/reinsurance problem with minimization of the ruin probability in finite time.   

\begin{rem}
Notice that there is no loss of generality to focus on the above Mayer form, as the case of Bolza criterion with running reward:
\begin{align}
J(\alpha) &= \; \E\big[ \int_0^\tau f(X_t^\alpha,\alpha_t) \d t +  g(X_\tau^\alpha) \big],
\end{align}
can be reduced to the Mayer form by considering as usual the additional component $(Y_t)_t$ of the state process, driven by
\begin{align}
\d Y_t &= \; f(X_t,\alpha_t) \d t, 
\end{align}
and the corresponding terminal reward function $\tilde g(x,y)$ $=$ $y+g(x)$. 
\end{rem}
The control problem \eqref{Mayer} with exit time can be solved in a model-based setting, e.g. when the coefficients $\mu$, $\sigma$ in \eqref{diffX}, and the analytical form of $g$ are known,  by PDE methods with splitting scheme as described in appendix \ref{PDE Implementation: Splitting scheme}, and eventually by backward SDE methods, see \cite{boumen09}. In the last years, there has been an important literature about the use of deep learning techniques in the numerical resolution of stochastic control problems and PDEs, which have shown success for notably overcoming the curse of dimensionality,  and  we refer the reader to the  recent surveys by \cite{Bec20} and \cite{Ger21}.  However, these methods do not work well for our class of control problem with exit time. Indeed, for example, when trying to apply the global method of  \cite{Han16} 
by approximating the policy by a neural network with parameters 
$\theta$,  the differentiation of the associated gain function would lead to a Dirac function due to the presence of an indicator function related to the exit time, hence the gradient is ill-posed, which  prevents an efficient implementation of the stochastic gradient ascent algorithm.

In this paper, we propose two types of algorithms based on reinforcement learning for estimating the solution to the control problem \eqref{Mayer} in a model-free setting, i.e., without {\it a priori} knowledge of the model coefficients. We develop policy gradient methods  for learning 
approximate optimal control and value function based on samples of state and rewards. A key feature is to consider parametrized randomized policies, notably for overcoming the issue related to exit time in the  policy gradient  representation.  Our first algorithm   learns directly the optimal policy, while the second type of algorithm is of actor-critic nature by learning alternately the policy and the value function. This can be done either in an offline setting with updates rules based on the whole trajectories of the state, or in an online setting with update rules in real-time incrementally. Our algorithms can be viewed as extensions to controlled processes with exit   time of policy gradients in reinforcement learning usually designed for infinite or finite horizon, see \cite{Sutton}.  

The main application that we develop in this paper for stochastic control in the form \eqref{Mayer} concerns the pricing of buyback options in Stock Repurchase Programs (in short SRPs).  Those are defined as transactions initiated by companies to re-buy their proper stocks for various reasons including the raising of the debt-to-equity ratio or the improvement of earnings per share by reducing the number of outstanding shares. SRPs are also an alternative way to distribute the dividends to the shareholders, 
see \cite{MM61}. For more details about SRPs and its regulatory issues and associated tools, the reader can consult this report \cite{IOSC04}.

There exist several mechanisms for SRPs with complex contracts 
involving investment banks, where   the company mandates a bank to repurchase its shares through a derivative product.  
A well-known example  often used by practitioners is Accelerated Share Repurchases (ASRs), where at time $t=0$ the bank borrows a quantity $B$ of shares required by the company from shareholders, and then purchases  progressively from the open market the quantity $B$ to give it back to shareholders. In addition, the bank becomes in a long position of an American option where at some exercise time $\tau$, the company should pay the bank the average price between $0$ and $\tau$ for each share. 

The valuation of ASRs has recently attracted attention in the literature.
Guéant et al. \cite{Gueant15} consider in a discrete time/space model the pricing of ASRs, which leads to a tree based algorithm. 
Jaimungal et al. \cite{jaimungal16} investigate the same problem in continuous time/space setting by additionally taking into consideration temporary and long-term market impact, and  characterize the execution frontier. Guéant et al. \cite{Gueant19} use    deep learning algorithms in the spirit of \cite{bueetal} and \cite{becchejen19} 
for the pricing of ASRs contracts 
and buyback contract called VWAP-minus profit-sharing. In such contract, the exercise time $\tau$ is chosen by the bank once the amount of shares requested by the company is redeemed. In this paper, we consider a buyback contract where the exercise time $\tau$ is entirely characterized by the execution strategy and can not be chosen by any party. We shall call such a  buyback product as Barrier VWAP-minus. Actually, one can show (see Appendix \ref{appenbarrier}) that in absence of market impact, the price of the Barrier VWAP-minus is equal to the price of the VWAP-minus. 

The pricing of barrier VWAP-minus 
leads to a stochastic control formulation as in \eqref{Mayer} where the exit time is  defined as the first stopping time when the controlled inventory exceeds the quantity of shares to be purchased by the bank within a finite time interval. 
We  implement our algorithms to this pricing problem: since they are model-free, they are robust to model misspecifications, and are valid notably for general model for the stock price including market impact and transaction costs.

We first compare  our numerical results with those obtained by PDE methods with splitting scheme 
as detailed in Appendix \ref{PDE Implementation: Splitting scheme}. 
Our validation test  consists in approximating the optimal policy and then computing the price using Monte Carlo: it provides then by definition a lower bound to the true price of the constrained VWAP-minus contract. 
We show that our model-free policy gradient algorithms yield accurate results similar to PDE schemes designed in a specific model-based setting. It is also less costly and more stable than methods performed in \cite{Gueant19} in a model-based setting,  where the control and the stopping time are parametrized by two distinct neural networks. Moreover, it has the advantage to be easily implemented in general factor models including market impact. We illustrate notably the impact of market impact on the optimal trading policies. 



The rest of the paper is structured as follows.  We develop in Section \ref{sec:PG} the policy gradient approach with randomized policies, and present our two types of  algorithms. Section \ref{SRPsOptions} is devoted to the application to valuation of SRP, including the case with market impact and transaction costs, 
with numerical results illustrating the convergence and accuracy of our algorithms,  and comparison with other methods.  

\section{Policy gradient methods} \label{sec:PG}

We consider a time discretization of the stochastic control problem \eqref{Mayer}. Let $\T$ $=$ $\{t_0 = 0 < \ldots < t_i < \ldots < t_N = T\}$  be a subdivision of $[0,T]$ of size $N$ with time steps $\Delta t_i$ $=$ $t_{i+1}-t_i$, $i$ $=$ $0,\ldots,N-1$.  By misuse of notation, we denote by 
$(X_{t_i})_{i \in \llbracket 0,N\rrbracket}$  the Markov decision process (MDP)  arising from the time discretization of the controlled state process $(X_t)_t$, and it is characterized by an initial distribution $p_0$ for $X_{t_0}$, and 
the transition kernel function $p(.|t_i,x_i,a)$ representing the probability of the next state $X_{t_{i+1}}$ given the current state 
$X_{t_i}$ $=$ $x_i$ $\in$ $\Xc$, and an action $a$ $\in$ $A$ at time $t_i$.  
Notice that in a model-free setting, this transition kernel is unknown.

A randomized policy in this discretized time setting is a measurable transition kernel function $\pi$ $:$ $(t_i,x_i)$ $\in$ $\T\times\Xc$ $\mapsto$ $\pi(.|t_i,x_i)$ $\in$ $\Pc(A)$ (the set of probability measures on $A$), and we say that $\alpha$ $=$ $(\alpha_{t_i})_{i\in\llbracket 0,N-1\rrbracket}$ 
is a randomized feedback control generated from the stochastic policy  $\pi$, written as $\alpha$ $\sim$ $\pi$, when $\alpha_{t_i}$ is drawn from $\pi(.|t_i,X_{t_i})$ at any time $t_i$. 

The exit time of the Markov decision process $(X_{t_i})_{i \in \llbracket 0,N\rrbracket}$ is given by 
\begin{align}
\tau &=\;  \inf\{ t_i \in \T: X_{t_i} \notin \Oc\} \wedge t_N,
\end{align}
and the gain functional associated to the Markov decision process with exit time and randomized feedback control $\alpha$ $\sim$ $\pi$  is given by 
\begin{align}
\mrJ(\pi) &=  \E_{\alpha\sim\pi} \big[ g(X_\tau) \big]. 
\end{align}
Here the notation $\E_{\alpha\sim\pi}[.]$ means that the expectation is taken when the Markov decision process $(X_{t_i})$ is controlled by the randomized feedback control $\alpha$ generated from the stochastic policy $\pi$. 

We now consider stochastic policies $\pi$ $=$ $\pi_\theta$ with parameters $\theta$ $\in$ $\R^D$, and which admit densities  with respect to some measure $\nu$ on $A$: 
$\pi_\theta(\d a|t_i,x_i)$ $=$ $\rho_\theta(t_i,x_i,a) \nu(\d a)$, for some parametrized measurable functions $\rho_\theta$ $:$ $\T\times\Xc\times A$ $\rightarrow$ $(0,\infty)$.  
\begin{itemize}
\item when $A$ is a finite space, say $A$ $=$ $\{a_1,\ldots,a_M\}$, 
we take $\nu$ as the counting measure, and choose softmax policies, i.e., 
\begin{align} \label{softmax} 
\rho_\theta(t_i,x_i,a_m) &= \; \frac{ \exp\big(\phi_{\theta_m}(t_i,x_i) \big) } { \sum_{\ell=1}^M \exp\big(\phi_{\theta_\ell}(t_i,x_i) \big)}, \quad m=1,\ldots,M,
\end{align}
where $\phi_{\theta_m}$ are neural networks on $[0,T]\times\R^d$, and $\theta$ $=$ $(\theta_1,\ldots,\theta_M)$ gathers all the parameters of the $M$ neural networks.  In this case, the score function is given by 
\begin{align}
\nabla_{\theta_\ell} \log \rho_\theta(t_i,x_i,a_m) &= \; 
\big(\delta_{m\ell} - \rho_\theta(t_i,x_i,a_\ell) \big) \nabla_{\theta_\ell} \phi_{\theta_\ell}(t_i,x_i). 
\end{align}
\item when $A$ is a continuous space of $\R^m$, we can choose typically  a Gaussian distribution on $\R^m$ for the stochastic policy, with mean parametrized by neural network $\mu_\theta(t,x)$ valued on $A$, and variance a positive definite matrix $\Sigma$ on $\R^{m\times m}$ to encourage exploration, e.g. $\Sigma$ $=$ $\varepsilon I_m$.  In this case, $\nu$ is the Lebesgue measure on $\R^m$,  and the density is 
\begin{align}
\rho_\theta(t_i,x_i,a) &= \;  \frac{1}{(2\pi)^{m/2} {\rm det}(\Sigma)^{1\over 2}} \exp \big( - \frac{1}{2}\big(a - \mu_\theta(t_i,x_i)\big)\trans \Sigma^{-1} \big(a - \mu_\theta(t_i,x_i)\big) \Big).  
\end{align}
In this case, the score function is given by
\begin{align}
\nabla_\theta \log \rho_\theta(t_i,x_i,a) &= \; 
\nabla_\theta\mu_\theta(t_i,x_i)\trans \Sigma^{-1} (a- \mu_\theta(t_i,x_i)). 
\end{align}
\end{itemize}

We then denote, by abuse of notation, $\mrJ(\theta)$ $=$ $\mrJ(\pi_\theta)$, the performance function viewed as a function of the parameter $\theta$ of the  stochastic policy, and the principle of policy gradient method is to maximize over $\theta$ this function by stochastic gradient ascent algorithm. 
In a model-free setting, the purpose is then to derive a suitable expectation representation of the gradient function $\nabla_\theta \mrJ(\theta)$ that does not involve unknown model coefficients and transition kernel $p(.|t,x,a)$ of the state process, but only sample observations of the states $X_{t_i}$, $i$ $=$ $0,\ldots,N$, hence of the exit time $\tau$,  
when taking decisions $\alpha$ $\sim$ $\pi_\theta$, with known chosen family of densities $\rho_\theta$.

\subsection{Policy gradient representation}

Our first main result is to provide a stochastic policy gradient representation for the performance function $\mrJ$ by adapting arguments in the infinite or finite horizon case. 

\begin{thm} \label{thmPG1} 
We have 
\begin{align} \label{gradient1}
\nabla_\theta \mrJ(\theta) &= \;  \E_{\alpha\sim\pi_\theta} \Big[ g(X_\tau) 
\sum_{i=0}^{N-1}  
\nabla_\theta \log \rho_\theta(t_i,X_{t_i},\alpha_{t_i})  1_{t_i <  \tau}  \Big]. 
\end{align}
\end{thm}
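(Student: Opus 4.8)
The plan is to use the likelihood-ratio (``log-derivative'', REINFORCE) identity in discrete time, organized so that the \emph{random length} of the controlled trajectory --- which is exactly where the exit time enters --- is handled by conditioning on the value of $\tau$ rather than by any pathwise differentiation. The point that makes randomized policies essential is that once $\mrJ(\theta)$ is written as an integral against the law of the trajectory, the parameter $\theta$ enters \emph{only} through the smooth density factors $\rho_\theta(t_i,x_i,a_i)$; the exit time $\tau$ then becomes a random variable whose law depends on $\theta$ in a differentiable way, so one never has to differentiate an indicator $1_{X_{t_i}\notin\Oc}$ in $\theta$, which is precisely the obstruction for deterministic parametrized controls mentioned in the introduction.

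First I would fix $k\in\llbracket 0,N\rrbracket$ and write the joint law of the trajectory segment $(X_{t_0},\alpha_{t_0},\ldots,X_{t_{k-1}},\alpha_{t_{k-1}},X_{t_k})$ on the event $\{\tau=t_k\}$, on which $g(X_\tau)=g(X_{t_k})$ depends on that segment only. By the Markov property and the definition of a randomized feedback control $\alpha\sim\pi_\theta$, this law disintegrates as the initial law $p_0(\d x_0)$, times $\prod_{i=0}^{k-1}\rho_\theta(t_i,x_i,a_i)\,\nu(\d a_i)\,p(\d x_{i+1}|t_i,x_i,a_i)$, times the domain constraints $1_{x_i\in\Oc}$ for $i<k$ and, if $k<N$, $1_{x_k\notin\Oc}$. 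Summing over $k$,
\begin{align*}
\mrJ(\theta) \;=\; \sum_{k=0}^N \int g(x_k)\,\Big(\prod_{i=0}^{k-1}\rho_\theta(t_i,x_i,a_i)\Big)\,\Lambda_k(\d x_0,\d a_0,\ldots,\d x_k),
\end{align*}
where $\Lambda_k$ gathers $p_0$, the transition kernels and the domain indicators and is \emph{independent of $\theta$}. No density for $p(\cdot|t_i,x_i,a)$ is needed: one keeps the state-path contribution inside $\Lambda_k$ and only uses that the policy has the density $\rho_\theta$ with respect to $\nu$; equivalently, one may fix a reference parameter $\theta_0$ and work with the Radon--Nikodym derivative $\prod_i \rho_\theta/\rho_{\theta_0}$.

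Next I would differentiate under the (finite) sum and the integral. Under mild conditions --- $g$ bounded and $\theta\mapsto\rho_\theta(t_i,x_i,a)$ differentiable with a uniformly dominated gradient, both satisfied by the softmax and Gaussian families above --- dominated convergence gives $\nabla_\theta\big(\prod_{i=0}^{k-1}\rho_\theta\big)=\big(\prod_{i=0}^{k-1}\rho_\theta\big)\sum_{i=0}^{k-1}\nabla_\theta\log\rho_\theta(t_i,x_i,a_i)$, hence
\begin{align*}
\nabla_\theta\mrJ(\theta)\;=\;\sum_{k=0}^N \E_{\alpha\sim\pi_\theta}\Big[\, g(X_\tau)\,1_{\tau=t_k}\sum_{i=0}^{k-1}\nabla_\theta\log\rho_\theta(t_i,X_{t_i},\alpha_{t_i})\,\Big].
\end{align*}
On $\{\tau=t_k\}$ the indices $i\in\llbracket 0,k-1\rrbracket$ are exactly those with $t_i<\tau$, so interchanging the two sums turns the prefix $\sum_{k=0}^N 1_{\tau=t_k}\sum_{i=0}^{k-1}$ into $\sum_{i=0}^{N-1}1_{t_i<\tau}$, which is \eqref{gradient1}.

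The only genuinely technical step, and the main obstacle, is justifying the interchange of $\nabla_\theta$ with the expectation together with the bookkeeping of $\Lambda_k$ when the transition kernels are singular; everything else is the standard score-function computation. An equivalent alternative would be to extend the dynamics by an absorbing cemetery after $\tau$, turning the problem into a genuine finite-horizon MDP to which the classical policy gradient theorem applies with the full sum $\sum_{i=0}^{N-1}$, and then to observe that on $\{t_i\ge\tau\}$ the term vanishes because $g(X_\tau)$ is $\FF_{t_i}$-measurable while $\E[\nabla_\theta\log\rho_\theta(t_i,X_{t_i},\alpha_{t_i})\,|\,\FF_{t_i}]=\int\nabla_\theta\rho_\theta(t_i,X_{t_i},a)\,\nu(\d a)=\nabla_\theta 1 = 0$; this is what reinstates the indicator $1_{t_i<\tau}$.
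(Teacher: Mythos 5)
Your proof is correct, but it is organized differently from the paper's. The paper keeps the full fixed-length trajectory $(X_{t_0},\ldots,X_{t_N})$ throughout: it writes $\mrJ(\theta)=\E_{\alpha\sim\pi_\theta}[G(X_{t_0},\ldots,X_{t_N})]$ for a fixed path functional $G$, differentiates the complete likelihood $\borho_\theta^N=\prod_{i=0}^{N-1}\rho_\theta(t_i,x_i,a_i)$ to obtain the \emph{full} sum $\sum_{i=0}^{N-1}\nabla_\theta\log\rho_\theta$, and only then kills the terms with $t_i\geq\tau$ by the mean-zero-score identity $\E_{\alpha\sim\pi_\theta}[\nabla_\theta\log\rho_\theta(t_i,X_{t_i},\alpha_{t_i})\,|\,X_{t_i}]=\nabla_\theta\int_A\rho_\theta(t_i,X_{t_i},a)\,\nu(\d a)=0$ combined with the fact that $g(X_\tau)1_{t_i\geq\tau}$ is measurable with respect to the pre-$t_i$ history. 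That last step is exactly your ``equivalent alternative'' in the closing paragraph. Your main route instead disintegrates $\mrJ(\theta)$ over the events $\{\tau=t_k\}$ \emph{before} differentiating, so that the $\theta$-dependence of each summand is already confined to the prefix $\prod_{i=0}^{k-1}\rho_\theta$ and the truncation $1_{t_i<\tau}$ emerges from the sum interchange $\sum_{k}1_{\tau=t_k}\sum_{i=0}^{k-1}=\sum_{i=0}^{N-1}1_{t_i<\tau}$ rather than from a cancellation. What your version buys is that no post-exit terms ever need to be shown to vanish, and it makes explicit (more than the paper does) why randomization removes the Dirac obstruction and under what domination hypotheses the differentiation under the integral is licit; what it costs is the extra bookkeeping of the $\theta$-independent measures $\Lambda_k$ (which should also carry the reference measures $\nu(\d a_i)$) and the verification that integrating out the post-$\tau$ variables under the full trajectory law indeed returns the truncated disintegration. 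Both arguments are sound; the conclusion matches \eqref{gradient1}.
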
 
\begin{proof} For a path $(x_0,\ldots,x_N)$ $\in$ $\Xc^{N+1}$, we denote by 
\begin{align}
\iota(x_0,\ldots,x_N) &= \;  \inf\{ i \in \llbracket 0,N\rrbracket: x_i \notin \Oc \} \wedge N, 
\end{align}
so that the exit time of  $(X_{t_i})_{i \in \llbracket 0,N\rrbracket}$ is written as $\tau$ $=$ $t_{\iota(X_{t_0},\ldots,X_{t_N})}$.  Let us then introduce the function  $G$ defined on $\Xc^{N+1}$ by 
$G(x_0,\ldots,x_N) = \; g( x_{\iota(x_0,\ldots,x_N)})$, 
so that  
\begin{align}
\mrJ(\theta) &= \;  \E_{\alpha\sim\pi_\theta}\big[ G(X_{t_0},\ldots,X_{t_N}) \big] \\
& = \; \int_{\Xc^{N+1}} \int_{A^N}  G(x_0,\ldots,x_N) p_0(\d x_0) \prod_{i=0}^{N-1} \pi_\theta(\d a_i|t_i,x_i) p(\d x_{i+1}|t_i,x_i,a_i) \\
& = \;  \int_{\Xc^{N+1}} \int_{A^N} G(\bolx) p_0(\d x_0) \borho_\theta^N(\bolx,\boa)  \prod_{i=0}^{N-1}  p(\d x_{i+1}|t_i,x_i,a_i) \nu(\d a_i), \label{diffJ} 
\end{align}
where we set $\bolx$ $=$ $(x_0,\ldots,x_N)$, $\boa$ $=$ $(a_0,\ldots,a_{N-1})$, and 
\begin{align}
\borho_\theta^N(\bolx,\boa) &= \;  \prod_{i=0}^{N-1}  \rho_{\theta} (t_i,x_i,a_i). 
\end{align}
By using the classical log-likelihood trick: $\nabla_\theta \borho_\theta^N(\bolx,\boa)$ $=$ $\big(\nabla_\theta \log \borho_\theta^N(\bolx,\boa) \big) \borho_\theta^N(\bolx,\boa)$, and noting that 
\begin{align}
\nabla_\theta \log \borho_\theta^N(\bolx,\boa) &= \; \sum_{i=0}^{N-1} \nabla_\theta \log \rho_\theta(t_i,x_i,a_i), 
\end{align}
we deduce by differentiating \eqref{diffJ} that 
\begin{align}
\nabla_\theta \mrJ(\theta) &= \; \int_{\Xc^{N+1}} \int_{A^N} G(\bolx) \nabla_\theta \log \borho_\theta^N(\bolx,\boa) p_0(\d x_0)  \prod_{i=0}^{N-1} \pi_\theta(\d a_i|t_i,x_i) p(\d x_{i+1}|t_i,x_i,a_i) \\
&= \;  \E_{\alpha\sim\pi_\theta} \Big[ G(X_{t_0},\ldots,X_{t_N})  \sum_{i=0}^{N-1}  \nabla_\theta \log \rho_\theta(t_i,X_{t_i},\alpha_{t_i})   \Big]. 
\end{align}
Finally, observe that for any $i$ $\in$ $\llbracket 0,N-1\rrbracket$,  we have
\begin{align}
& \;  \E_{\alpha\sim\pi_\theta} \Big[ G(X_{t_0},\ldots,X_{t_N})  1_{t_i  \geq \tau}    \nabla_\theta \log \rho_\theta(t_i,X_{t_i},\alpha_{t_i})   \Big] \\
= & \;  \E_{\alpha\sim\pi_\theta} \Big[ g(X_\tau)  1_{t_i \geq  \tau}    \nabla_\theta \log \rho_\theta(t_i,X_{t_i},\alpha_{t_i})   \Big]  \\
 = & \;  \E_{\alpha\sim\pi_\theta} \Big[ g(X_\tau)  1_{t_i \geq \tau}   \E_{\alpha\sim\pi_\theta}\big[  \nabla_\theta \log \rho_\theta(t_i,X_{t_i},\alpha_{t_i}) \big| X_{t_i} \big]   \Big] \\
 = & \;  \E_{\alpha\sim\pi_\theta} \Big[ g(X_\tau)  1_{t_i \geq  \tau}  \underbrace{\nabla_{\theta} \Big( \int_{A} \rho_{\theta}(t_i,X_{t_i}^{},a) \nu(\d a) \Big)}_{=\ 0} \Big]  \; = \; 0, \label{trick}
\end{align}
which yields the required result. 
\end{proof}

 \vspace{2mm}

Alternately, we now provide a second representation formula for the gradient of the performance function by exploiting the dynamic programming.  Let us introduce the dynamic version of $\mrJ$. For $i$ $\in$ $\llbracket 0,N\rrbracket$, and $x$ $\in$ $\Xc$, we define the value (performance) function associated to the policy $\pi_\theta$
\begin{align}
V_i^\theta(x) & := \; \E_{\alpha\sim\pi_\theta} \big[ g(X_{\tau_i}) | X_{t_i} = x \big],
\end{align}
where $\tau_i$ $=$ $\inf\{ t_j \in \T, t_j \geq t_i: X_{t_j} \notin \Oc\} \wedge t_N$, so that $\mrJ(\theta)$ $=$ $\E[V_0^\theta(X_0)]$. We notice that $V_N^\theta(x)$ $=$ $g(x)$, for all $x$ $\in$ $\Xc$, and $V_i^\theta(x)$ $=$ $g(x)$, for all $i$ $\in$ $\llbracket 0,N-1\rrbracket$, and  $x$ $\notin$ $\Oc$. Moreover, 
by  the dynamic programming (which is here simply reduced to the law of conditional expectations), we have for $i$ $\in$ $\llbracket 0,N-1\rrbracket$:
\begin{align} \label{dynpro}
V_i^\theta(x) &= \; \E_{\alpha\sim\pi_\theta} \Big[ V_{i+1}^\theta(X_{t_{i+1}})    
| X_{t_i} = x \Big], \quad \mbox{ for } x \in \Oc. 
\end{align}

\begin{thm} \label{thmAC} 
We have 
\begin{align} \label{repAC}
\nabla_\theta \mrJ(\theta) &= \; \E_{\alpha\sim\pi_\theta} \Big[ \sum_{i=0}^{N-1}  V_{i+1}^\theta(X_{t_{i+1}}) \nabla_\theta \log  \rho_\theta(t_i,X_{t_i},\alpha_{t_i}) 1_{t_i < \tau}  \Big].
\end{align} 
\end{thm}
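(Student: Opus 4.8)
The plan is to start from the policy gradient representation \eqref{gradient1} of Theorem \ref{thmPG1} and rewrite each summand by conditioning on the information generated by the trajectory up to time $t_{i+1}$. Fix $i \in \llbracket 0,N-1\rrbracket$ and let $\Gc_{i+1} := \sigma(X_{t_0},\alpha_{t_0},\ldots,X_{t_i},\alpha_{t_i},X_{t_{i+1}})$. I would first record two elementary facts about the exit time. (i) The event $\{t_i < \tau\} = \{X_{t_0}\in\Oc,\ldots,X_{t_i}\in\Oc\}$ is measurable with respect to $\sigma(X_{t_0},\ldots,X_{t_i}) \subset \Gc_{i+1}$, so the random variable $Z_i := \nabla_\theta\log\rho_\theta(t_i,X_{t_i},\alpha_{t_i})\,1_{t_i<\tau}$ is $\Gc_{i+1}$-measurable. (ii) On $\{t_i<\tau\}$ one has $X_{t_i}\in\Oc$, hence the global exit time coincides with the exit time after $t_{i+1}$, i.e.\ $\tau = \tau_{i+1}$ where $\tau_{i+1} = \inf\{t_j\in\T,\, t_j\geq t_{i+1}: X_{t_j}\notin\Oc\}\wedge t_N$.

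Next I would apply the tower property, using the $\Gc_{i+1}$-measurability of $Z_i$: $\E_{\alpha\sim\pi_\theta}[g(X_\tau)Z_i] = \E_{\alpha\sim\pi_\theta}\big[Z_i\,\E_{\alpha\sim\pi_\theta}[g(X_\tau)\mid\Gc_{i+1}]\big]$. On $\{t_i<\tau\}$, fact (ii) replaces $g(X_\tau)$ by $g(X_{\tau_{i+1}})$; and since $\tau_{i+1}$ and $(X_{t_{i+1}},\ldots,X_{t_N})$ depend on $\Gc_{i+1}$ only through $X_{t_{i+1}}$ — because the transition kernel is Markov and the feedback policy $\pi_\theta$ depends only on the current time and state — the Markov property yields $\E_{\alpha\sim\pi_\theta}[g(X_{\tau_{i+1}})\mid\Gc_{i+1}] = \E_{\alpha\sim\pi_\theta}[g(X_{\tau_{i+1}})\mid X_{t_{i+1}}] = V_{i+1}^\theta(X_{t_{i+1}})$. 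The case $X_{t_{i+1}}\notin\Oc$ is not special: then $\tau_{i+1}=t_{i+1}$ and $V_{i+1}^\theta(X_{t_{i+1}})=g(X_{t_{i+1}})=g(X_\tau)$, consistent with the boundary values of $V_{i+1}^\theta$ recalled before \eqref{dynpro}. Off $\{t_i<\tau\}$ the factor $Z_i$ vanishes, so in all cases $Z_i\,\E_{\alpha\sim\pi_\theta}[g(X_\tau)\mid\Gc_{i+1}] = Z_i\,V_{i+1}^\theta(X_{t_{i+1}})$ almost surely. Taking expectations and summing over $i=0,\ldots,N-1$ gives \eqref{repAC}.

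I expect the only delicate point to be the rigorous justification of step (ii) together with the Markov reduction of the conditional expectation: namely that, conditionally on $\Gc_{i+1}$ and on $\{t_i<\tau\}$, the post-$t_{i+1}$ dynamics under $\alpha\sim\pi_\theta$ is exactly that of the MDP restarted from $X_{t_{i+1}}$ at time $t_{i+1}$, so that $\E_{\alpha\sim\pi_\theta}[g(X_\tau)\mid\Gc_{i+1}] = V_{i+1}^\theta(X_{t_{i+1}})$ on that event; the rest is bookkeeping. An essentially equivalent alternative route would be to differentiate the dynamic programming relation \eqref{dynpro} in $\theta$ — using on $\{x\in\Oc\}$ the log-likelihood trick on $\rho_\theta$ and the fact that $p(\cdot|t_i,x,a)$ does not depend on $\theta$ — to obtain the one-step recursion $\nabla_\theta V_i^\theta(x) = \E_{\alpha\sim\pi_\theta}\big[\nabla_\theta V_{i+1}^\theta(X_{t_{i+1}}) + V_{i+1}^\theta(X_{t_{i+1}})\nabla_\theta\log\rho_\theta(t_i,x,\alpha_{t_i})\mid X_{t_i}=x\big]$ for $x\in\Oc$, together with $\nabla_\theta V_i^\theta\equiv 0$ on $\Xc\setminus\Oc$ and for $i=N$; iterating this identity from $\mrJ(\theta)=\E[V_0^\theta(X_{t_0})]$ down the trajectory and inserting the indicators $1_{t_i<\tau}$ at each step (legitimate since $\nabla_\theta V_i^\theta$ vanishes outside $\Oc$) telescopes precisely to \eqref{repAC}.
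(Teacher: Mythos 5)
Your main argument is correct, and it takes a genuinely different route from the paper. The paper does not pass through Theorem \ref{thmPG1} at all: it differentiates the dynamic programming relation \eqref{dynpro} directly, obtaining exactly the one-step recursion $\nabla_\theta V_i^\theta(x) = \E_{\alpha\sim\pi_\theta}\big[\nabla_\theta V_{i+1}^\theta(X_{t_{i+1}}) + V_{i+1}^\theta(X_{t_{i+1}})\nabla_\theta\log\rho_\theta(t_i,x,\alpha_{t_i})\mid X_{t_i}=x\big]$ for $x\in\Oc$, and then iterates it using $\nabla_\theta V_i^\theta\equiv 0$ on $\Xc\setminus\Oc$ and $\nabla_\theta V_N^\theta\equiv 0$, the indicators $1_{X_{t_j}\in\Oc}$ accumulating into $1_{t_i<\tau}$ — this is precisely the ``alternative route'' you sketch in your last sentences. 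Your primary route instead deduces \eqref{repAC} from \eqref{gradient1} by the tower property and the Markov property of the controlled chain under the feedback policy, replacing $g(X_\tau)$ by $V_{i+1}^\theta(X_{t_{i+1}})$ termwise on $\{t_i<\tau\}$; the key observations (that $\{t_i<\tau\}$ and the score at time $t_i$ are $\Gc_{i+1}$-measurable, and that $\tau=\tau_{i+1}$ on that event) are right, and the case $X_{t_{i+1}}\notin\Oc$ is handled consistently with the boundary values of $V_{i+1}^\theta$. What your approach buys is a transparent link between the two gradient representations — they agree term by term after projection onto $\Gc_{i+1}$ — at the cost of having to invoke the Markov property of the state process under $\pi_\theta$, which you correctly flag as the one point needing care; the paper's derivation is self-contained (independent of Theorem \ref{thmPG1}), avoids any conditioning argument on the path $\sigma$-algebra, and produces along the way the recursion for $\nabla_\theta V_i^\theta$ that underlies the actor-critic updates.
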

\begin{proof} From \eqref{dynpro}, we have for $(i,x_i)$ $\in$ $\llbracket 0,N-1\rrbracket \times\Oc$
\begin{align}
V_i^\theta(x_i) &=  \; \int_\Xc \int_A  V_{i+1}^\theta(x_{i+1}) \rho_\theta(t_i,x,a) \nu(\d a) p(\d x_{i+1}| t_i,x_i,a).   
\end{align}
By differentiating with respect to $\theta$, and using again the log-likelihood trick, we get
\begin{align}
\nabla_\theta V_i^\theta(x_i) & = \; \int_\Xc \int_A  \nabla_\theta \big[  V_{i+1}^\theta(x_{i+1}) \big]  \rho_\theta(t_i,x_i,a) \nu(\d a) p(\d x_{i+1}| t_i,x_i,a) \\
& \quad \; + \; \int_\Xc \int_A  V_{i+1}^\theta(x_{i+1}) \nabla_\theta [\log \rho_\theta(t_i,x_i,a) ]  \rho_\theta(t_i,x_i,a) \nu(\d a) p(\d x_{i+1}| t_i,x_i,a) \\ 
& = \;  \int_\Oc \int_A  \nabla_\theta \big[  V_{i+1}^\theta(x_{i+1}) \big]  \pi_\theta(\d a|t_i,x_i)  p(\d x_{i+1}| t_i,x_i,a) \\
& \quad + \; \E_{\alpha\sim\pi_\theta} \Big[   V_{i+1}^\theta(X_{t_{i+1}}) \nabla_\theta \log  \rho_\theta(t_i,X_{t_i},\alpha_{t_i})  | X_{t_i} = x_i \Big], \; i \in \llbracket 0,N-1\rrbracket, 
\end{align}
for all $x_i$ $\in$ $\Oc$, by noting that $\nabla_\theta V_{i+1}^\theta(x)$ $=$ $0$ for $x$ $\notin$ $\Oc$, and $ V_{i+1}^\theta(x)$ $=$ $V_{i+1}^\theta(x)$ for $x$ $\in$ $\Oc$. By iterating over $i$, and 
noting that $\nabla_\theta V_N^\theta(.)$ $\equiv$ $0$, we deduce that for all $x_0\in\mathcal{O}$
\begin{align}
\nabla_\theta V_0^\theta(x_0) &= \; \E_{\alpha\sim\pi_\theta} \Big[ \sum_{i=0}^{N-1}  V_{i+1}^\theta(X_{t_{i+1}}) \nabla_\theta \log  \rho_\theta(t_i,X_{t_i},\alpha_{t_i}) \prod_{j=1}^i 1_{X_{t_j}\in\mathcal{O}} \big| X_{t_0} = x_0 \Big] 
\end{align}
Since $\prod_{j=1}^i 1_{X_{t_j}\in\mathcal{O}} = 1_{t_i<\tau}$ and $\nabla_\theta V_0^\theta(.)$ $=$ $0$ on $\Xc\setminus\Oc$, we get the  required representation formula. 
\end{proof}

\begin{rem}
It is known that stochastic gradient policy algorithms suffer from high variance, and a good alternative is 
to use a baseline. For instance, in the representation \eqref{repAC}, we can substract 
to $ V_{i+1}^\theta(X_{t_{i+1}})$ the term $V_i^{\theta}(X_{t_i}^{})$ without biaising the gradient, i.e. 
\begin{align} \label{repACB}
\nabla_\theta \mrJ(\theta) &= \E_{\alpha\sim\pi_\theta} \Big[ \sum_{i=0}^{N-1} \big(  V_{i+1}^\theta(X_{t_{i+1}}) - V_i^{\theta}(X_{t_i}^{}) \big) \nabla_\theta \log  \rho_\theta(t_i,X_{t_i},\alpha_{t_i}) 1_{t_i<\tau}  \Big], 
\end{align} 
by the same trick as in \eqref{trick}:
\begin{align*}
    &\; \mathbb{E}_{\alpha\sim\pi_\theta} \Big[ V_i^{\theta}(X_{t_i}^{}) 
    \nabla_{\theta} \log \rho_{\theta}(t_i,X_{t_i}^{},\alpha_{t_i}) 1_{t_i<\tau} \Big]\\
    = & \; \mathbb{E}_{\alpha\sim\pi_\theta}\Big[ V_i^{\theta}(X_{t_i}^{}) 1_{t_i<\tau}
    \mathbb{E_{\alpha\sim\pi_\theta}}\big[ \nabla_{\theta} \log \rho_{\theta}(t_i,X_{t_i}^{},\alpha_{t_i})) \mid X_{t_i}^{} \big] \Big]\\
    = & \; \mathbb{E}_{\alpha\sim\pi_\theta}\Big[ V_i^{\theta}(X_{t_i}^{}) 1_{t_i<\tau}\underbrace{\nabla_{\theta} \Big( \int_{A} \rho_{\theta}(t_i,X_{t_i}^{},a) \nu(\d a)  \Big)}_{=\ 0} \Big] \; = \; 0. 
\end{align*}
\end{rem}

\subsection{Algorithms}

We now propose policy gradient algorithms which are based on the representation of the previous section. 
They do not require necessarily the knowledge of model coefficients and transition kernel $p(.|t,x,a)$ of the state process, but only sample observations of the states $X_{t_i}$, $i$ $=$ $0,\ldots,N$, when taking decisions 
$\alpha$ according to the chosen family of randomized policies, via e.g. an environment simulator (blackbox), 
hence of the exit time $\tau$. They do neither require the knowledge of the analytical form of the reward function $g$, and instead, we can consider that given an input/observation  of a state $x$,  
the associated output/reward $g(x)$ is evaluated via e.g. a blackbox simulator.

\vspace{2mm}

Our first algorithm (see pseudo-code in Algorithm \ref{algoPG}) 
is based on the gradient representation \eqref{gradient1}.

\begin{algorithm2e}[H] 
\DontPrintSemicolon 
\SetAlgoLined 
\vspace{1mm}
{\bf Input data}: Number of episodes $E$, mini-batch size $K$, 
learning rate  $\eta$ 
for policy gradient estimation;  
Parametrized family of randomized policies $\pi_\theta$ with densities $\rho_\theta$; \\ 
{\bf Initialization}: parameter $\theta$; \\
\For{each episode $e$ $=$ $1,\ldots,E$}
{select a random path $k$ $=$ $1,\ldots,K$; \\
Initialize state $X_0^{(k)}$ $\in$ $\Oc$; \\
\For{$i$ $=$ $0,\ldots,N-1$} 
 {Generate action $\alpha^{(k)}_{t_i}$ $\sim$ $\pi_\theta(.|t_i,X_{t_i}^{(k)})$ \\
Simulate by a model or observe (e.g. by blackbox) state $X_{t_{i+1}}^{(k)}$ \\
If $X_{t_{i+1}}^{(k)}$ $\notin$ $\Oc$ or $t_{i+1}$ $=$ $T$, store the exit time $\tau^{(k)}$ $=$ $t_{i+1}$, 
compute or observe by blackbox $G^{(k)}$ $:=$ $g(X_{\tau^{(k)}}^{(k)})$, and close the loop; \\
Otherwise $i$ $\leftarrow$ $i+1$;
} 
{Compute for path $k$
\begin{align}
\Gamma_\theta^{(k)} := G^{(k)} 
\sum_{t_i < \tau^{(k)}}^{}  
\nabla_\theta \log \rho_\theta(t_i,X_{t_i}^{(k)},\alpha^{(k)}_{t_i})  
\end{align}
Update parameters of the policies: $\theta$ $\leftarrow$ $\theta$ $+$ $\eta$ 
$\frac{1}{K}$ $\sum_{k=1}^K \Gamma_\theta^{(k)}$; 
}
}
{\bf Return}: $\pi_\theta$
\caption{Stochastic gradient policy  \label{algoPG} }
\end{algorithm2e}

\vspace{3mm}

Our second type of algorithm is based on the gradient representation \eqref{repACB}, and is of actor-critic type: it consists in estimating simultaneously via fixed-point iterations the randomized optimal policy (the actor)  by policy gradient (PG), and the value function (critic) by performance evaluation relying on the martingale property relation \eqref{dynpro}.   
More precisely, in addition to the parametrized family $\pi_\theta$ of randomized policies, we are given a family of functions $\Vc_\phi$ on $[0,T]\times\Xc$, with parameter $\phi$, e.g. neural network, aiming to approximate the value function. The parameters $(\theta,\phi)$ are then updated alternately as follows: given a current estimation 
$(\theta^{(n)},\phi^{(n)})$, the parameter $\theta$ is updated according to the PG \eqref{repACB} by replacing $V$ by $\Vc_{\phi^{(n)}}$: 
\begin{align}
\theta^{(n+1)} & = \; \theta^{(n)} + \eta \E_{\alpha\sim\pi_{\theta^{(n)}} }\Big[ \sum_{t_i< \tau} \big(\Vc_{\phi^{(n)}}(t_{i+1},X_{t_{i+1}}) 
- \Vc_{\phi^{(n)}}(t_i,X_{t_i}) \big) 
\nabla_\theta \log \rho_{\theta^{(n)}}(t_i,X^{}_{t_i},\alpha_{t_i}) \Big]
\end{align} 
while $\phi$ is updated by minimizing the square regression error: 
\begin{align}
\E\Big[ \Big| \Vc_{\phi^{(n)}}(t_{i+1},X_{t_{i+1}}) - \Vc_\phi(t_i,X_{t_i}) \Big|^2 1_{X_{t_i} \in \Oc} \Big]. 
\end{align}
Notice that we only need to learn the value function on the domain $\Oc$ by sampling the state process until the exit time $\tau$, as it is extended on $\Xc\setminus\Oc$ by the reward $g$. 

The pseudo-code of our Actor-Critic algorithm is described in  Algorithm  \ref{AlgoAC1}.

\vspace{2mm}

\begin{algorithm2e}[H] 
\DontPrintSemicolon 
\SetAlgoLined 
\vspace{1mm}
{\bf Input data}: Number of episodes $E$, mini-batch size $K$, 
learning rates  $\eta^G$, $\eta^V$ for policy and value function estimation; 
Parametrized family $\pi_\theta$ with densities $\rho_\theta$ for randomized policies, and $\Vc_\phi$ for value function; 
\\
{\bf Initialization}: parameter $\theta$, $\phi$; \\
\For{each episode $e$ $=$ $1,\ldots,E$}
{select a random path $k$ $=$ $1,\ldots,K$; \\
Initialize state $X_0^{(k)}$ $\in$ $\Oc$; \\
\For{$i$ $=$ $0,\ldots,N-1$} 
{
Generate action $\alpha^{(k)}_{t_i}$ $\sim$ $\pi_\theta(.|t_i,X_{t_i}^{(k)})$ \\
Simulate by a model or observe (e.g. by blackbox) state $X_{t_{i+1}}^{(k)}$ \\
If $X_{t_{i+1}}^{(k)}$ $\notin$ $\Oc$ or $t_{i+1}$ $=$ $T$, set $\tau^{(k)}$ $=$ $t_{i+1}$, 
$\Vc_\phi(t_{i+1},X_{t_{i+1}}^{(k)})$ $=$ $g(X_{t_{i+1}}^{(k)})$ computed e.g. by blackbox, 
and close the loop; \\
Otherwise $i$ $\leftarrow$ $i+1$;
} 
{Compute for path $k$
\begin{align}
\Gamma_\theta^{(k)} &:=  \sum_{t_i < \tau^{(k)}}^{}  \big(\Vc_\phi(t_{i+1},X_{t_{i+1}}^{(k)}) - \Vc_\phi(t_i,X^{(k)}_{t_i}) \big) 
\nabla_\theta \log \rho_\theta(t_i,X_{t_i}^{(k)},\alpha^{(k)}_{t_i})  \\
\Delta_\phi^{(k)} &:= \sum_{t_i< \tau^{(k)}} \big(\Vc_\phi(t_{i+1},X_{t_{i+1}}^{(k)}) - \Vc_\phi(t_i,X^{(k)}_{t_i}) \big) 
\nabla_\phi \Vc_\phi (t_i,X^{(k)}_{t_i}) 
\end{align}

Actor update: $\theta$ $\leftarrow$ $\theta$ $+$ $\eta^G$ $\frac{1}{K}$ $\sum_{k=1}^K \Gamma_\theta^{(k)}$; \\
Critic update: $\phi$ $\leftarrow$ $\phi$ $+$ $\eta^V$ $\frac{1}{K}$ $\sum_{k=1}^K \Delta_\phi^{(k)}$; 
}
}
{\bf Return}: $\pi_\theta$, $\Vc_\phi$. 
\caption{Actor-Critic (offline) \label{AlgoAC1} }
\end{algorithm2e}

\vspace{3mm}

In the above actor-critic algorithm, the parameters are updated once the whole state trajectories are sampled. 
We can design an online version where the parameters are updated in real-time incrementally, see pseudo-code in Algorithm \ref{AlgoAC2}.  

\vspace{1mm}

\begin{algorithm2e}[H] 
\DontPrintSemicolon 
\SetAlgoLined 
\vspace{1mm}
{\bf Input data}: Number of episodes $E$, mini-batch size $K$, 
learning rates  $\eta^G$, $\eta^V$ for policy and value function estimation; 
Parametrized family $\pi_\theta$ with densities $\rho_\theta$ for randomized policies, and $\Vc_\phi$ for value function; 
\\
{\bf Initialization}: parameter $\theta$, $\phi$; \\
\For{each episode $e$ $=$ $1,\ldots,E$}
{select a random path $k$ $=$ $1,\ldots,K$; \\
Initialize state $X_0^{(k)}$ $\in$ $\Oc$; \\
\For{$i$ $=$ $0,\ldots,N-1$} 
{
Generate action $\alpha^{(k)}_{t_i}$ $\sim$ $\pi_\theta(.|t_i,X_{t_i}^{(k)})$ \\
Simulate by a model or observe (e.g. by blackbox) state $X_{t_{i+1}}^{(k)}$ \\
If $X_{t_{i+1}}^{(k)}$ $\notin$ $\Oc$ or $t_{i+1}$ $=$ $T$, set $\tau^{(k)}$ $=$ $t_{i+1}$, 
$\Vc_\phi(t_{i+1},X_{t_{i+1}}^{(k)})$ $=$ $g(X_{t_{i+1}}^{(k)})$ computed e.g. by blackbox\\
Actor update: $$\theta \leftarrow \theta + \eta^\mathrm{G}\big(\Vc_\phi(t_{i+1},X_{t_{i+1}}^{(k)}) - \Vc_\phi(t_i,X^{(k)}_{t_i}) \big) \nabla_\theta \log \rho_\theta(t_i,X_{t_i}^{(k)},\alpha^{(k)}_{t_i})$$\\
Critic update: 
$$\phi \leftarrow \phi + \eta^\mathrm{V} \big(\Vc_\phi(t_{i+1},X_{t_{i+1}}^{(k)}) - \Vc_\phi(t_i,X^{(k)}_{t_i}) \big) 
\nabla_\phi \Vc_\phi (t_i,X^{(k)}_{t_i}) $$\\
If $X_{t_{i+1}}^{(k)}$ $\notin$ $\Oc$ or $t_{i+1}$ $=$ $T$, close the loop; Otherwise $i$ $\leftarrow$ $i+1$;
} 
}
{\bf Return}: $\pi_\theta$, $\Vc_\phi$. 
\caption{Actor-Critic (online) \label{AlgoAC2} }
\end{algorithm2e}

\section{Application to Share Repurchase Programs Pricing} \label{SRPsOptions}

\subsection{Problem formulation} \label{subsec:notations}

We consider a company/client with stock price $S$.  This client mandates a bank to buy a quantity $B$ of shares of stock within a period $[0,T]$. At early termination date $\tau$ or at maturity $T$ if no early termination has appeared, the client pays to the bank the 
Volume Weighted Average Price (in short VWAP) defined as $V_\tau:={1 \over \tau}\int_0^\tau S_t dt$,  discounted by the number of shares, i.e.,  the amount $B \; V_{ \tau}$. The bank gives to the client the quantity $B$ of shares, and its value at $\tau$ is $BS_{\tau}$. From the bank perspective, it is equivalent to being long an option with payoff $B(V_{\tau}- S_{\tau})$ at $\tau$. If the bank fails to collect the quantity $B$ before $T$ for the company, it must pay a penalty to the client. 
For the sake of simplicity, we have  not included rate, dividends and repo, although this can be easily incorporated.

We denote by $(Q_t)_{t\in[0,T]}$ the quantity of shares (inventory) hold by the trader of the bank, and 
governed by 
\begin{equation}
    dQ_{t}=\alpha_{t}dt,
\end{equation}
where $\alpha$ represents the trading speed, valued in $[0,\overline{a}]$, for some constant 
$\overline{a}$ $\in$ $(0,\infty)$. 
The underlying stock price $S$ is a continuous time process, possibly controlled by $\alpha$ in presence of  permanent market impact. The dynamics of the VWAP process $(V_t)_{t}$ and of the cumulated cost process $(C_t)_t$ are given by 
\begin{align}
\d V_t & = \; \Big( \frac{S_t - V_t}{t} \Big) \d t, \quad 0 < t \leq T, \; V_0 \; = \: S_0, \quad \d C_t \; = \; \alpha_t S_t \d t, \; C_0 = 0. 
\end{align}
The profit and loss (PnL) of the bank at execution time $\tau$ $\leq$ $T$ is then given by
\begin{align}
    \mathrm{PnL}_\tau^\alpha  = B(V_\tau - S_\tau) - \lambda(B-Q_\tau)_+ - \beta BC_\tau, 
\end{align}
where $\lambda>0$ is a penalization parameter, effective when $\tau$ $=$ $T$, and $Q_T < B$, and $\beta$ $\geq$ $0$ is a transaction cost parameter. 
The price of the barrier VWAP-minus contract is determined by the following stochastic control problem
\begin{equation} \label{SCP}
    P_{{BV}} := \sup_{\alpha\in\Ac} \; 
    \E \big[  \mathrm{PnL}_{\tau^\alpha}^{\alpha} \big], 
\end{equation}
where $\mathcal{A}$ is the set of admissible trading strategies, and $\tau^\alpha$  
$:=\inf\{t>0\mid Q_t \geq B \} \wedge T$ is the early termination time of the contract, defined as the first time when the inventory exceeds the required quantity $B$ of shares.
This fits into the form \eqref{Mayer} with state variables $X$ $=$ $(S,V,Q,C)$. 

\begin{rem}
In this context, the price of the ASR is given by 
\begin{align}
P_{ASR} & := \;  \sup_{\alpha\in\Ac} \sup_{\bar{\tau}\in\Tc_{0,T}}  \E \big[  \mathrm{PnL}_{\bar{\tau}}^{\alpha} \big],
\end{align}
while the price of the VWAP-minus contract as considered in \cite{Gueant19} is given by 
\begin{align}
P_{V} & := \; \sup_{\alpha\in\Ac}  \sup_{\bar{\tau}\in\Tc_{\tau^\alpha,T}} \E \big[  \mathrm{PnL}_{\bar{\tau}}^{\alpha} \big],
\end{align}
where $\Tc_{t,T}$ is the set of stopping times valued in $[t,T]$. The prices of these contracts have been computed in \cite{Gueant19} by using two distinct neural networks for approximating the policy $\alpha$ and the stopping time $\bar\tau$, and by definition, we should have $P_{ASR}$ $\geq$ $P_V$ $\geq$ $P_{BV}$. Actually, one can show  that $P_V$ $=$ $P_{BV}$  in absence of market impact and transaction costs, see Appendix \ref{appenbarrier}. In other words, the pricing problem for the VWAP-minus can be reduced to a stochastic control with exit time, and there is no need to consider an additional optimization over stopping times $\bar\tau$, which is quite advantageous from a numerical point of view.  
\end{rem}

The algorithm proposed in \cite{Gueant19} considers two neural networks: $p_{\theta}$ for the randomized stopping time and $\mra_{\xi}$ for trading rate  to estimate the optimal strategy leading to $P_V$. The optimisation is performed by a stochastic gradient ascent with the loss function
\begin{equation}
    \mathcal{L}(\theta,\xi) = \mathbb{E}\Big[ \sum_{i=0}^{N-1} \prod_{j=0}^{i-1} \left(1-p_{\theta}(t_j,X_{t_j})\right)p_{\theta}(t_i,X_{t_i})\mathrm{PnL_{t_i}} + \prod_{j=0}^{N-1}\left(1-p_{\theta}(t_j,X_{t_j})\right)\mathrm{PnL_{t_N}} \Big]. 
\end{equation}
Here $\prod_{j=0}^{i-1} \left(1-p_{\theta}(t_j,X_{t_j})\right)p_{\theta}(t_i,X_{t_i})$ represents the probability to exercise at $t_i$, for a given path of the state variables. For the profit and loss $\mathrm{PnL}$, $(B-Q_{t_i})^+$ is replaced by $|B-Q_{t_i}|$ to prevent  the agent from buying once the barrier is reached. Notice that the computation of the gradient of $\mathcal{L}$ with respect to $\theta$ and $\xi$ is extremely costly. Furthermore, the numerical experiments show  highly unstable results. Instead, our policy gradient algorithms is less costly and show stable results.

\subsection{Numerical results}

For the numerical results and comparison with other methods, we consider a price process with 
linear permanent price impact, governed by 
\begin{align}
\d S_t & = \; S_t \big( \gamma \alpha_t \d t + \sigma \d W_t), \quad 0 \leq t \leq T,
\end{align}
where $\gamma$ $\geq$ $0$ is a constant market impact parameter.  
The value function $P(t,x)$ with $t$ $\in$ $[0,T]$, $x$ $=$ $(s,v,q,c)$ $\in$ $\R_+^*\times\R_+^*\times\R_+\times\R_+$, is solution to the Bellman equation: 
\begin{align}
& \partial_t P  + 
\overline{a} \big( \gamma s\partial_s P + s \partial_c P + \partial_q P\big)^+  \label{PDE3d} \\
 + \;    \frac{s-v}{t} \partial_v P + \frac{1}{2} \sigma^2 s^2 \partial_s^2 P & \; = 0, \quad  t \in (0,T), 
(s,v,q,c) \in \R_+^*\times\R_+^*\times [0,B)\times\R_+, \nonumber
\end{align}
with the boundary conditions: 
\begin{equation}
\left\{
\begin{array}{ccl}
P(t,x) &=& B(v-s) - \beta B c, \quad t \in [0,T], (s,v,q,c) \in \R_+^*\times\R_+^*\times [B,\infty)\times\R_+, \\
P(T,x) &=& B(v-s) - \lambda(B-q)_+ - \beta B c, \quad (s,v,q,c) \in \R_+^*\times\R_+^*\times \R_+ \times\R_+.  
\end{array}
\right. 
\end{equation}
Notice that the optimal feedback control is of bang-bang type, namely:
\begin{align}
\hat a(t,x) &=\; 
\left\{
\begin{array}{cl}
0 & \mbox{ if } \; \gamma s\partial_s P + s \partial_c P + \partial_q P \leq 0, \\
\overline{a} & \mbox{ otherwise},
\end{array}
\right.
\end{align}
and therefore, we shall consider a softmax randomized policy as in \eqref{softmax} with  two possible values in 
$\{0,\overline{a}\}$.

For numerical experiments of our algorithms to the pricing of Barrier VWAP-minus, we neglect transaction costs $\beta$ $=$ $0$, and take the following parameters:  $T=60$ days, $S_0=1$, $B=1$, and $\overline{a}$ ranging from $5.04$ to $25.2$, 
$\lambda=5$, $\Delta t=1/252$, 
number of Monte-Carlo simulations: $N_\mathrm{MC}=10^5$. 

For the architecture of the neural networks for the randomized policies and the value function (for the actor-critic AC algorithm), 
we have used neural networks with $2$ hidden layers of dimension $8$ (linear ouput and Relu as intermediate activation function). The SGD is an Adam algorithm with standard hyper-parameters and $64$ as mini-batch size for SGP and $32$ for AC\footnote{The algorithm has been written from scratch in $C_{++}$.}. 
We first compute the price $P_{BV} \times 10^4$ in absence of market impact $\gamma$ $=$ $0$, and compare with the results obtained by HJB solver\footnote{We thank A. Conze and J. Adrien for their contributions to the PDE implementation of this project.} (see Appendix \ref{PDE Implementation: Splitting scheme}). 
We fix $\sigma$ $=$ $0.2$, and vary the maximal trading rate $\overline{a}$, and display the associated prices in Figure \ref{fig1}. By construction, as we compute the expectation for a sub-optimal control, we obtain a lower bound. In particular, as the underlying price process is a martingale, note that using a constant control, we get $\bf 0$ bp. 
The graph of convergence in terms of the number of episodes of the algorithm  for two pairs of parameters  of $(\overline{a},\sigma)$, 
is reported in Figure \ref{fig2}.

\begin{figure}[H]
\begin{center}
\includegraphics[width=10cm,height=6cm]{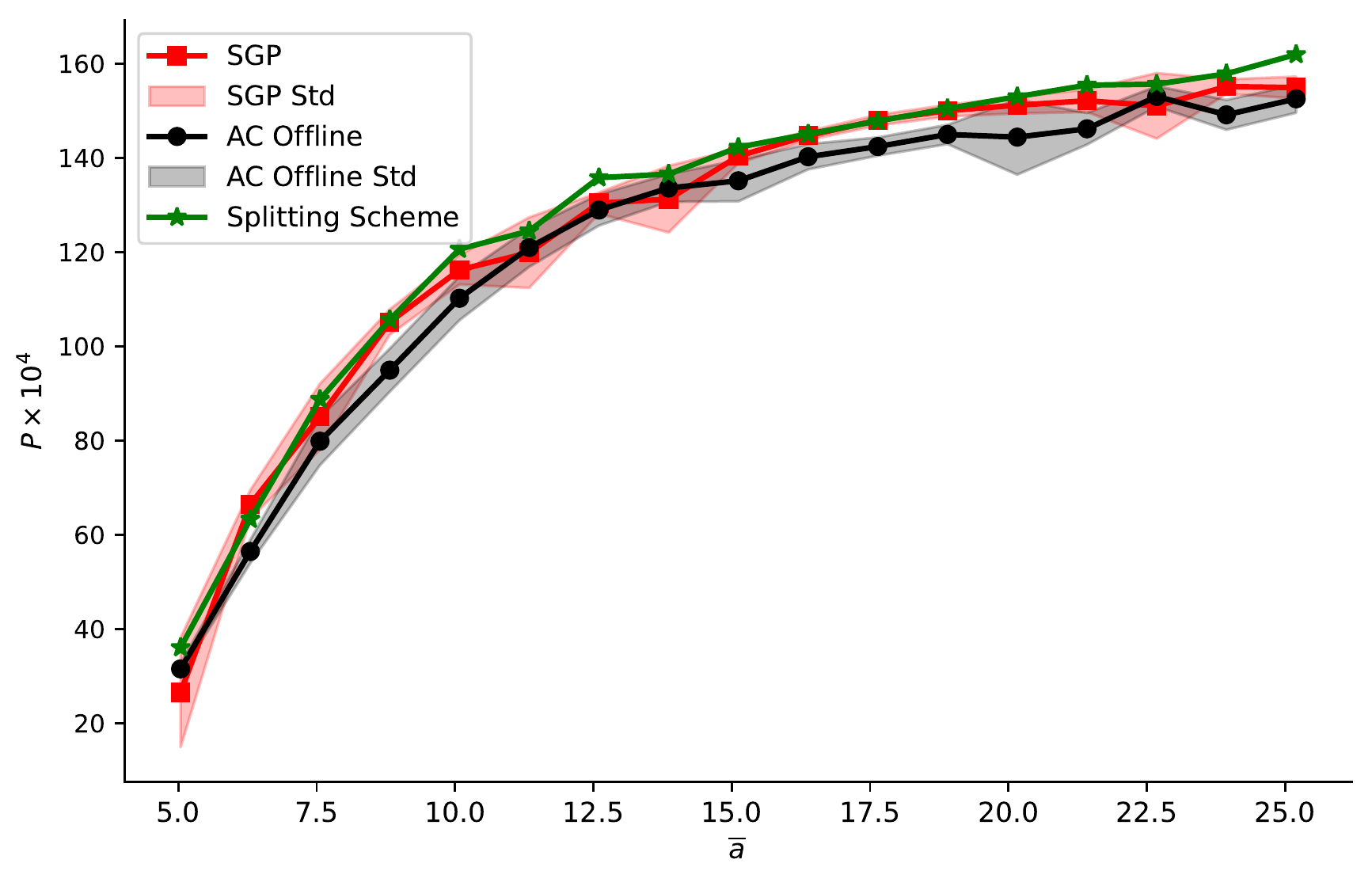}
\end{center}
\caption{ $P_{BV} \times 10^4$ in absence of market impact and transaction costs for different values of $\overline{a}$ computed with stochastic gradient policy and actor critic compared to splitting scheme (HJB solver). 
}
\label{fig1}
\end{figure}

\begin{figure}[H]
\begin{center}
\includegraphics[width=5cm,height=5cm]{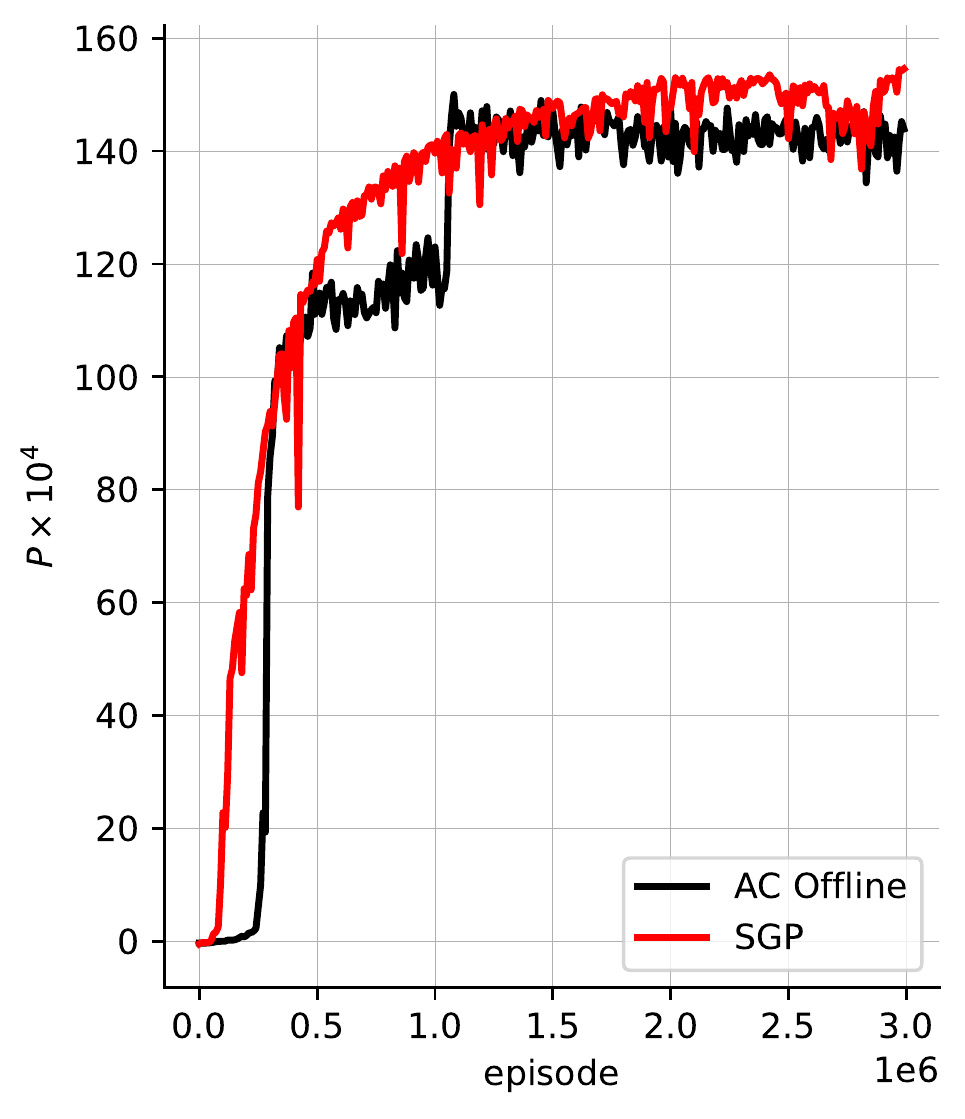}
\includegraphics[width=5cm,height=5cm]{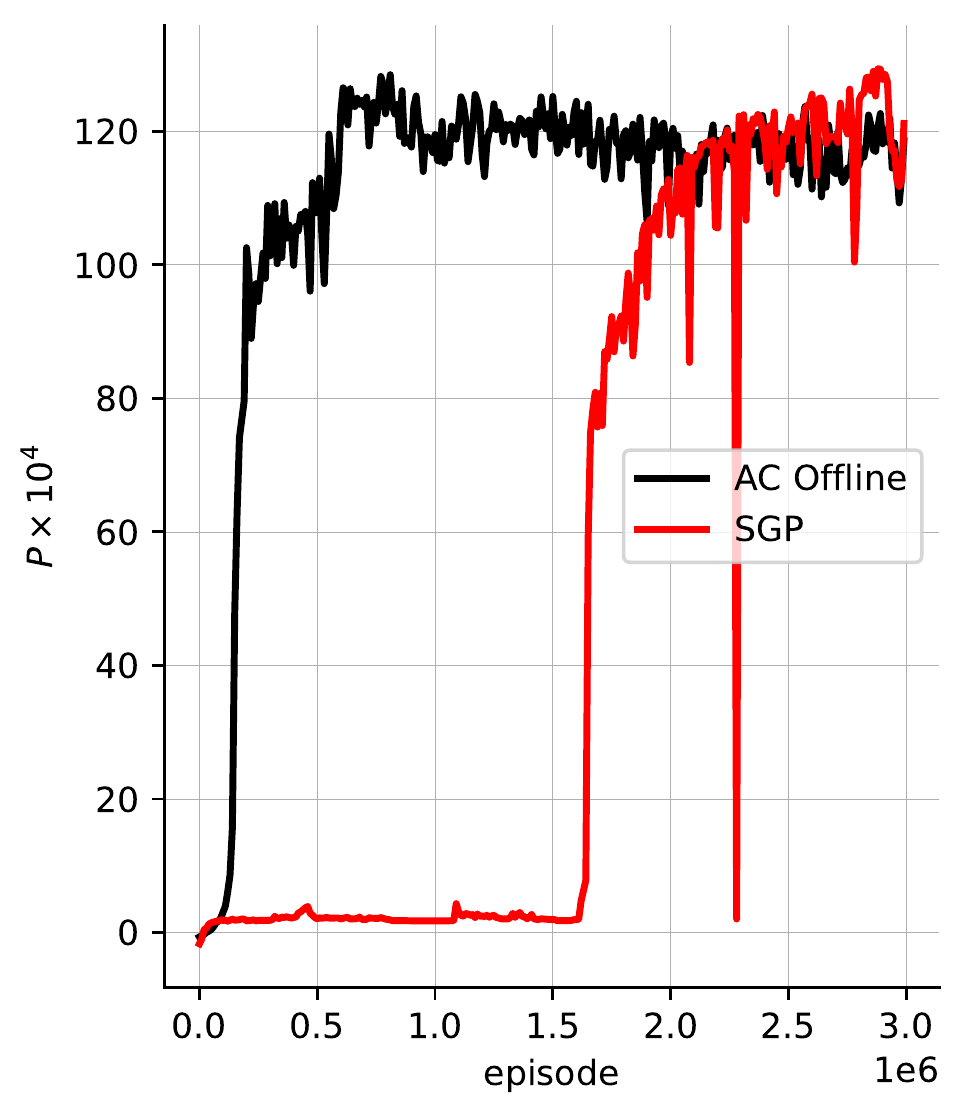}
\end{center}
\caption{Convergence as a function of iterations for $P_{BV} \times 10^4$ (without market impact and transaction costs) for $\overline{a}=36.5, \sigma=0.2$ (left) and $\overline{a}=9, \sigma=0.25$ (right)}
\label{fig2}
\end{figure}

The two algorithms (SGP and AC) produce results that are similar to those obtained using splitting scheme in terms of price. Furthermore, the execution time of these algorithms is also found to be comparable to that of HJB solver, with both methods taking about two minutes to converge, indicating that they are computationally efficient and capable of solving the problem in a timely manner. However, when the number of state variables increases, the PDE method becomes computationally very costly in comparison to our proposed methods. This means that for problems involving a large number of state variables, our method becomes the only viable option. Overall, the results of this study demonstrate that our proposed algorithms are a reliable and cost-effective alternative to the PDE method for solving this class of problems.

Next, we display the surface of the optimal randomized policy for fixed spot price $S$, for two different values of $t$ ($t$ $=$ $T/2$ and $t$ near maturity $T$), and as a function of the VWAP and inventory. Figure \ref{fig3a} shows the results in absence of market impact while Figure \ref{fig3b} considers the case with market impact. 
We observe that when we are close to the maturity, the probability of choosing the maximal trading rate is equal to one for almost all states of the VWAP and inventory with or without market impact: this is due to the fact that the trader has to achieve the goal of repurchasing the requested quantity of shares as he would be penalized otherwise.  
When we are in the midterm of the program, the  optimal policy  consists in choosing the maximal trading rate only when the VWAP is larger than some threshold, say $V^*$, 
as he has enough time to complete his repurchasing goal. 
In absence of market impact, this threshold $V^*$ is approximately equal to the spot price, while in presence of market impact,  
this threshold decreases with the market impact and also with the inventory. In other words, the trader will buy more quickly some fraction of the total shares $B$ as the market impact is more penalizing when approaching maturity.

\begin{figure}[H]
\begin{center}
\includegraphics[width=13cm,height=5.5cm]{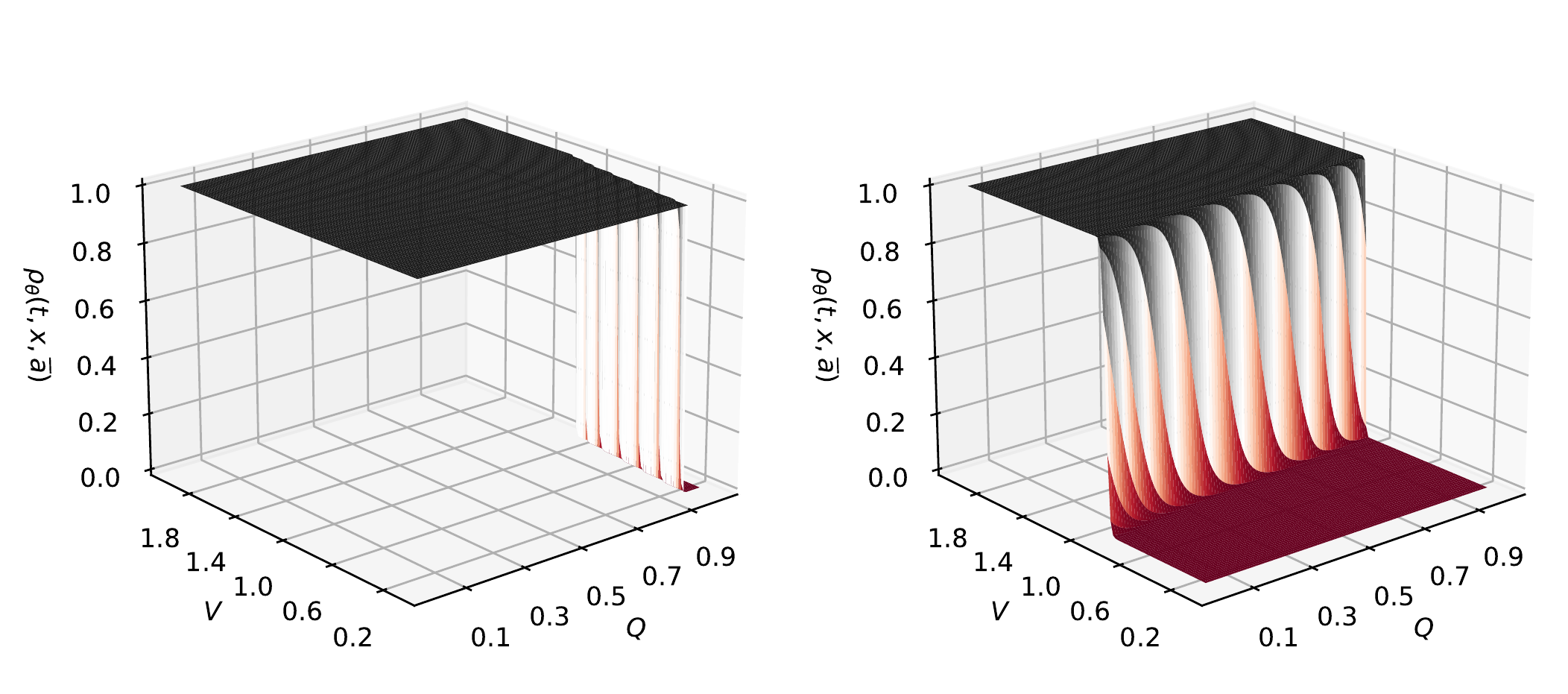}
\end{center}
\caption{Optimal policy $\rho_{\theta}(t,x,\overline{a})$ in absence of market impact and transaction costs for $Q\in[0,1], V\in[0.1,2]$, $S=1$, $\sigma$ $=$ $0.2$, $t=T-dt$ (left) and $t=\frac{T}{2}$ (right).}
\label{fig3a}
\end{figure}

\begin{figure}[H]
\begin{center}
\includegraphics[width=13cm,height=5.5cm]{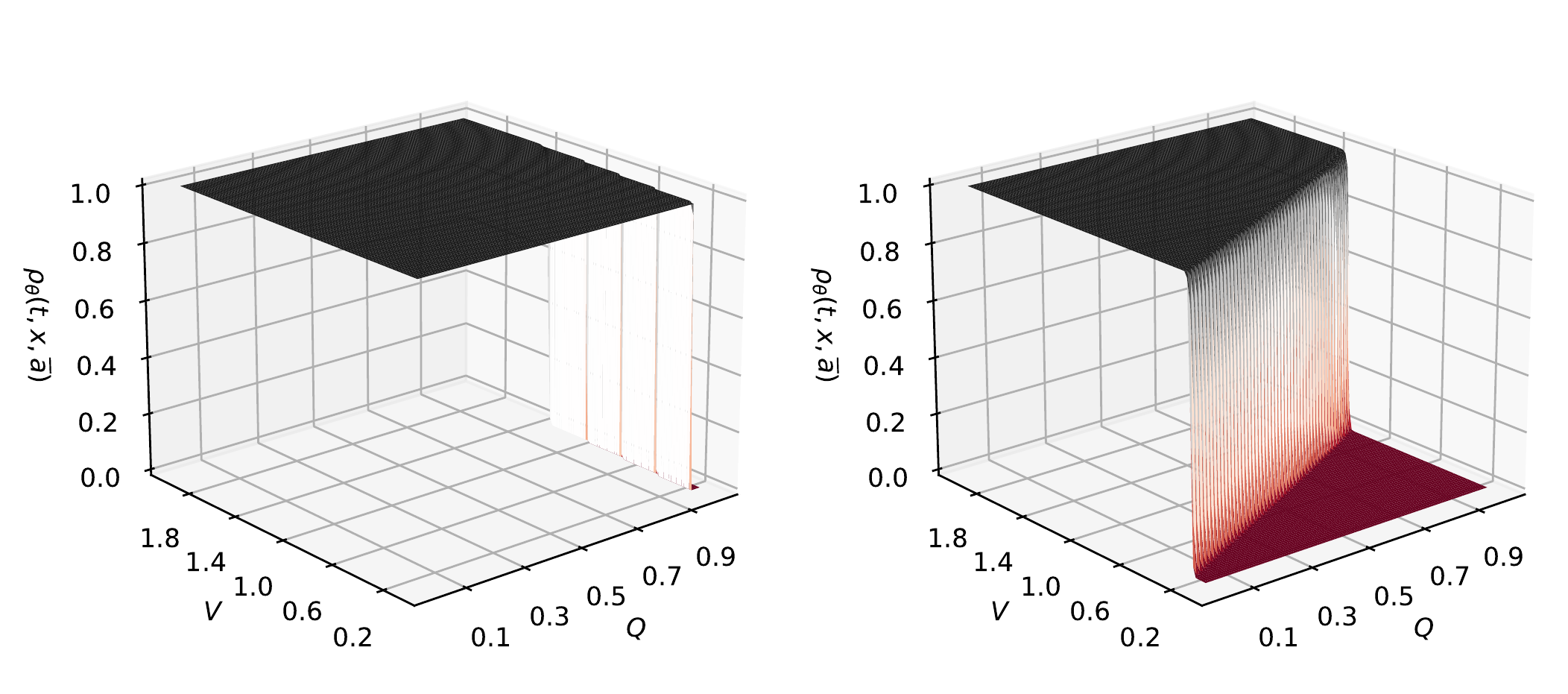}
\end{center}
\caption{Optimal policy $\rho_{\theta}(t,x,\overline{a})$ when market impact is included ($\gamma=0.1$) for $Q\in[0,1], V\in[0.1,2]$, $S=1$, $\sigma$ $=$ $0.2$, $t=T-dt$ (left) and $t=\frac{T}{2}$ (right).}
\label{fig3b}
\end{figure}

Finally, we represent the evolution of the optimal inventory for two price realizations, in the case without market impact (see Figure \ref{fig4a}) and with market impact (see Figure \ref{fig4b})  
The trader starts by purchasing some fraction of the total shares $B$ (and this is done more quickly and with a higher fraction in presence of market impact), then do not trade for a while until the time when the spot price falls below the VWAP, where he purchases the remaining shared to complete the  buy-back programme.

\begin{figure}[H]
\begin{center}
\includegraphics[width=13cm,height=6cm]{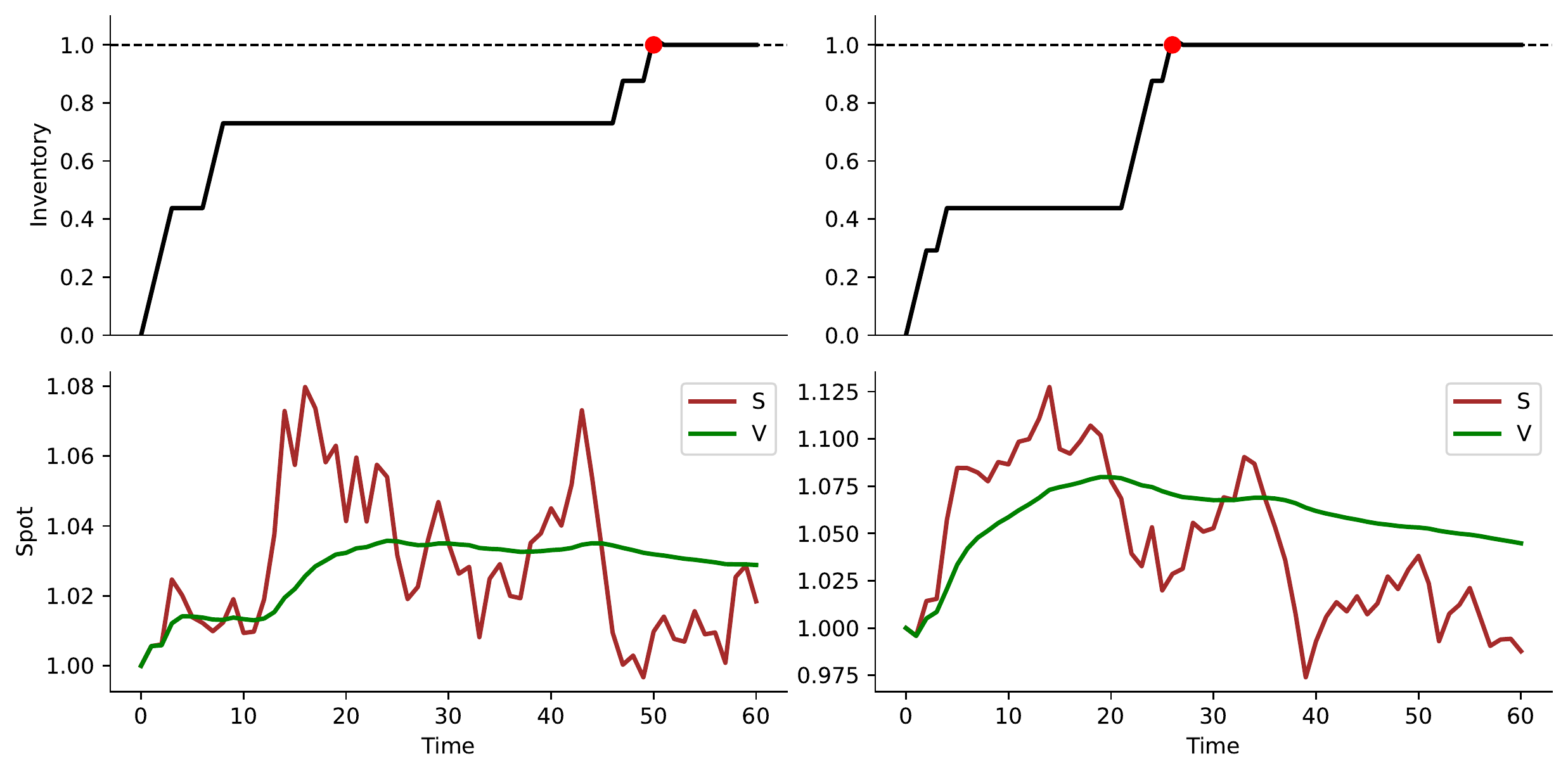}
\end{center}
\caption{Optimal repurchase strategy evolution for two price realizations ($\sigma$ $=$ $0.2$) in absence of market impact and transaction costs.}
\label{fig4a}
\end{figure}
\begin{figure}[H]
\begin{center}
\includegraphics[width=13cm,height=6cm]{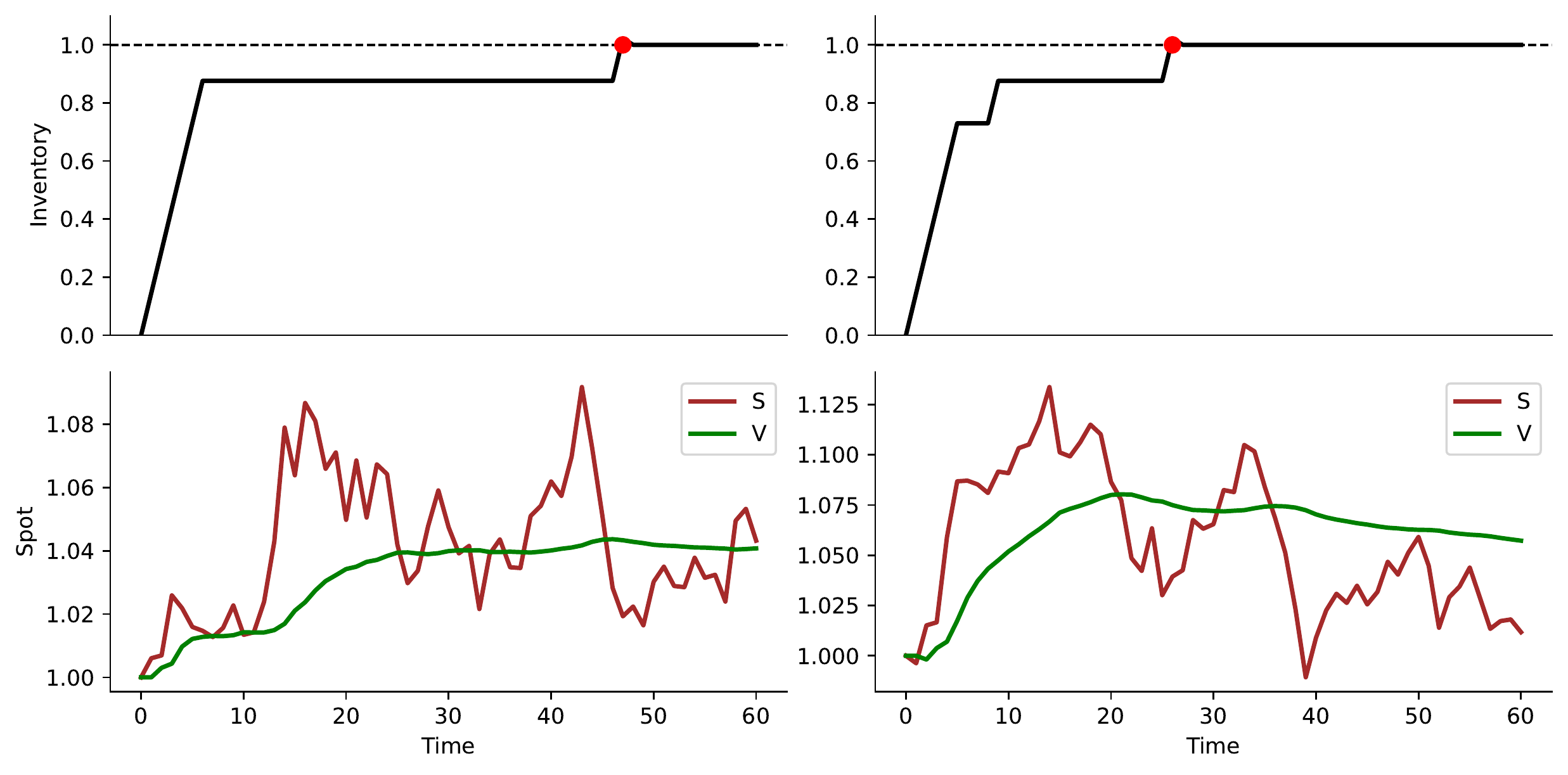}
\end{center}
\caption{Optimal repurchase strategy evolution for two price realizations ($\sigma$ $=$ $0.2$)  with market impact ($\gamma = 0.1$)}
\label{fig4b}
\end{figure}

\appendix

\section{Barrier VWAP-minus vs VWAP-minus} \label{appenbarrier}

Given a trading strategy $\alpha$ $\in$ $\Ac$, valued in $A$ $=$  $[0,\overline{a}]$, we denote by $\tau^\alpha$ the first time when the inventory $Q_t^\alpha$ $=$ $\int_0^t \alpha_s ds$ reaches $B$, and we consider the price of the VWAP-minus and Barrier VWAP-minus given by 
\begin{align}
P_V &= \;  \sup_{\alpha\in\Ac}\sup_{\bar\tau\in\Tc_{\tau^\alpha,T}}\E\big[\mathrm{PNL}^\alpha_{\bar\tau}\big] \quad 
P_{BV} \; = \; \sup_{\alpha\in\Ac} \E\big[\mathrm{PNL}^\alpha_{\tau^\alpha}\big],
\end{align}
where the $\mathrm{PNL}$, in absence of transaction costs, is given by
\begin{align}
\mathrm{PNL}^\alpha_t &= \; B\Big(\frac{1}{t} \int_0^t S_s ds - S_t\Big) - \lambda(B-Q_t^\alpha)_+, 
\quad 0 \leq t\leq T. 
\end{align}
The price process $S$ is a general continuous semimartingale process without market impact, and satisfying 
\begin{align} \label{condS}
\E \big[ \max_{t\in [0,T]} |S_t| \big] & < \;  \infty.
\end{align}
Notice that by Doob's inequality,  such condition \eqref{condS} is satisfied whenever the drift and the volatility of the asset price $S$ are bounded.

\begin{prop}
Under \eqref{condS}, and in absence of market impact and tran\-saction costs, we have $P_{BV}=P_V$. 
\end{prop}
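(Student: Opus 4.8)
The plan is to squeeze $P_{BV}$ and $P_V$ between a common optimal‑stopping value in which no control appears. Set $\phi_t := B\big(\frac{1}{t}\int_0^t S_s\,\d s - S_t\big) = B(V_t - S_t)$; since there is no market impact, $\phi$ depends only on the exogenous price path, it is continuous on $[0,T]$, and $\E[\sup_{t\le T}|\phi_t|]\le 2B\,\E[\max_{t\le T}|S_t|]<\infty$ by \eqref{condS}. Two elementary facts will be used repeatedly: (i) since $Q^\alpha_t=\int_0^t\alpha_s\,\d s\le\overline{a}\,t$, every exit time satisfies $\tau^\alpha\ge t_0:=(B/\overline{a})\wedge T$; (ii) $Q^\alpha$ is nondecreasing and continuous with $Q^\alpha_{\tau^\alpha}=B$ on $\{\tau^\alpha<T\}$, so the penalty $\lambda(B-Q^\alpha_t)_+$ is nonnegative in general and vanishes for every $t\in[\tau^\alpha,T]$ as soon as $\tau^\alpha<T$. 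We may assume $\overline{a}T>B$, i.e.\ $t_0=B/\overline{a}\in(0,T)$; otherwise $\tau^\alpha\equiv T$ for all $\alpha$ and $P_{BV}=P_V$ is immediate.

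I would first dispatch the two easy inequalities. Taking $\bar\tau=\tau^\alpha$ in the definition of $P_V$ gives $P_{BV}\le P_V$. Conversely, for any $\alpha\in\Ac$ and any $\bar\tau\in\Tc_{\tau^\alpha,T}$ one has $\bar\tau\in\Tc_{t_0,T}$ by (i) and $\mathrm{PNL}^\alpha_{\bar\tau}\le\phi_{\bar\tau}$ by dropping the nonnegative penalty in (ii); hence $P_V\le\sup_{\sigma\in\Tc_{t_0,T}}\E[\phi_\sigma]$.

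The heart of the proof is the reverse bound $P_{BV}\ge\sup_{\sigma\in\Tc_{t_0,T}}\E[\phi_\sigma]$: given an arbitrary stopping time $\sigma$ valued in $[t_0,T]$, I must produce admissible controls whose terminal payoff approaches $\E[\phi_\sigma]$. The obstacle is the rate constraint $\alpha\le\overline{a}$: one cannot arrange $\tau^\alpha=\sigma$ exactly, since reaching the barrier $B$ from an empty inventory costs at least $B/\overline{a}$ time units and one may not anticipate $\sigma$. The remedy is to ``pre‑load''. Fix $\varepsilon\in(0,B)$ and let $\alpha^\varepsilon$ buy at the maximal rate $\overline{a}$ on $[0,(B-\varepsilon)/\overline{a}]$ (so $Q^{\alpha^\varepsilon}$ reaches $B-\varepsilon<B$, and note $(B-\varepsilon)/\overline{a}<t_0\le\sigma$), then stay flat on $\big((B-\varepsilon)/\overline{a},\sigma\big)$ — this is adapted because $\{\sigma\le t\}\in\Fc_t$ — and finally buy again at rate $\overline{a}$ from time $\sigma$ on until the inventory hits $B$. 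A direct computation gives $\tau^{\alpha^\varepsilon}=(\sigma+\varepsilon/\overline{a})\wedge T$, with $Q^{\alpha^\varepsilon}_{\tau^{\alpha^\varepsilon}}=B$ on $\{\sigma+\varepsilon/\overline{a}\le T\}$ and $Q^{\alpha^\varepsilon}_T\ge B-\varepsilon$ on the complementary event, whence
\[
\mathrm{PNL}^{\alpha^\varepsilon}_{\tau^{\alpha^\varepsilon}}\;\ge\;\phi_{(\sigma+\varepsilon/\overline{a})\wedge T}-\lambda\varepsilon .
\]
Taking expectations yields $P_{BV}\ge\E\big[\phi_{(\sigma+\varepsilon/\overline{a})\wedge T}\big]-\lambda\varepsilon$; letting $\varepsilon\downarrow0$ and invoking the continuity of $\phi$ together with dominated convergence (domination by $\sup_{t\le T}|\phi_t|\in L^1$) gives $P_{BV}\ge\E[\phi_\sigma]$. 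Taking the supremum over $\sigma\in\Tc_{t_0,T}$ closes the chain $\sup_{\sigma}\E[\phi_\sigma]\le P_{BV}\le P_V\le\sup_{\sigma}\E[\phi_\sigma]$, so all three coincide; in particular $P_{BV}=P_V$ (and both equal the optimal stopping value $\sup_{\sigma\in\Tc_{t_0,T}}\E[B(V_\sigma-S_\sigma)]$). The only points requiring care are the adaptedness and the exit‑time identity for $\alpha^\varepsilon$, and the split into the cases $\sigma+\varepsilon/\overline{a}\le T$ and $\sigma+\varepsilon/\overline{a}>T$.
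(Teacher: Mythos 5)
Your proof is correct, and it takes a genuinely different route from the paper's. The paper argues directly that $P_{BV}\geq P_V$: it fixes an arbitrary pair $(\alpha,\bar\tau)$, follows $\alpha$ until the inventory reaches $B-\varepsilon$, pauses until $\bar\tau$, then buys at rate $\overline{a}$, and passes to the limit $\varepsilon\downarrow 0$ by dominated convergence under \eqref{condS} to recover $\E[\mathrm{PNL}^\alpha_{\bar\tau}]$. You instead sandwich both prices between the control-free optimal stopping value $\sup_{\sigma\in\Tc_{t_0,T}}\E[B(V_\sigma-S_\sigma)]$ with $t_0=(B/\overline{a})\wedge T$, using that the penalty is nonnegative for the upper bound and a ``pre-load at maximal rate, wait, then finish'' control for the lower bound; the same $\varepsilon$-gap trick and the same domination by $2B\max_{t\le T}|S_t|$ appear in both arguments. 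Your version is slightly stronger in its conclusion — it identifies the common value as a pure optimal stopping problem, making explicit that in the absence of market impact the trading path matters only through the timing of the barrier hit — at the cost of having to check the preliminary facts $\tau^\alpha\ge t_0$ and the reduction of $\Tc_{\tau^\alpha,T}$ to $\Tc_{t_0,T}$; the paper's perturbation argument is more local (it never needs to name the common value) and extends verbatim to comparing the two prices strategy by strategy. All the delicate points in your write-up (adaptedness of $\alpha^\varepsilon$, the identity $\tau^{\alpha^\varepsilon}=(\sigma+\varepsilon/\overline{a})\wedge T$, the case split at $T$, and the degenerate case $\overline{a}T\le B$) are handled correctly.
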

\begin{proof}
Fix some arbitrary $\alpha$ $\in$ $\Ac$, and $\bar\tau$ $\in$ $\Tc_{\tau^\alpha,T}$. 
For $\varepsilon$ $>$ $0$, denote by $\tau_\varepsilon^\alpha$ $=$ $\inf\{ t \geq 0: Q_t^\alpha = B-\varepsilon\} \wedge T$, which is smaller than $\tau^\alpha$, and converges a.s. to $\tau^\alpha$ when $\varepsilon$ goes to zero. Let us then define trading strategy $\alpha^\varepsilon$ $\in$ $\Ac$ by 
\begin{align}
\alpha_t^\varepsilon &=  \;
\left\{
\begin{array}{cl}
\alpha_t & \mbox{ for } \; 0 \leq t \leq  \tau_\varepsilon^\alpha \\
0 & \mbox{ for } \; \tau_\varepsilon^\alpha < t \leq  \bar\tau  \\
\overline{a} & \mbox{ for } \; \bar\tau < t \leq T,
\end{array}
\right. 
\end{align} 
which leads to an associated inventory $Q^{\alpha^\varepsilon}$ given by 
\begin{align}
Q_t^{\alpha^\varepsilon}  &=  \;
\left\{
\begin{array}{cl}
Q_t^{\alpha} & \mbox{ for } \; 0 \leq t \leq  \tau_\varepsilon^\alpha \\
B- \varepsilon & \mbox{ for } \; \tau_\varepsilon^\alpha < t \leq  \bar\tau  \\
B- \varepsilon + \overline{a}(t-\bar\tau) & \mbox{ for } \; \bar\tau < t \leq T. 
\end{array}
\right. 
\end{align}
Notice that $\tau^{\alpha^\varepsilon}$ (the first time when $Q^{\alpha^\varepsilon}$ reaches $B$) is lower-bounded by $\bar\tau$, decreases with $\varepsilon$, and converges a.s. to $\bar\tau$ when $\varepsilon$ goes to zero.

By definition, we have $P_{BV}$ $\geq$ $\E\big[\mathrm{PNL}^{\alpha^\varepsilon}_{\tau^{\alpha^\varepsilon}} \big]$. Let us check that $\mathrm{PNL}^{\alpha^\varepsilon}_{\tau^{\alpha^\varepsilon}}$ converges a.s. to 
$\mathrm{PNL}^{\alpha}_{\bar\tau}$ when $\varepsilon$ goes to zero.  We distinguish two cases:
\begin{itemize}
\item If $\tau^\alpha$ $<$ $T$. Then, $Q^\alpha_{\tau^\alpha}$ $=$ $B$ $\leq$ $Q_{\bar\tau}^\alpha$, and $Q^{\alpha^\varepsilon}_{\tau^{\alpha^\varepsilon}}$ $=$ $B-\varepsilon$ + 
$\overline{a}(\tau^{\alpha^\varepsilon}-\bar\tau)$ converges to $B$ when $\varepsilon$ goes to zero. It follows that 
\begin{align}
\mathrm{PNL}^{\alpha^\varepsilon}_{\tau^{\alpha^\varepsilon}} &= \; 
B\Big(\frac{1}{\tau^{\alpha^\varepsilon}} \int_0^{\tau^{\alpha^\varepsilon}} S_s ds - S_{\tau^{\alpha^\varepsilon}}\Big) 
- \lambda(B-Q_{\tau^{\alpha^\varepsilon}}^{\alpha^\varepsilon})_+ \\
& \rightarrow \; B\Big(\frac{1}{\bar\tau} \int_0^{\bar\tau} S_s ds - S_{\bar\tau}\Big) \; = \; 
\mathrm{PNL}^{\alpha}_{\bar\tau},
\end{align}
as $\varepsilon$ goes to zero. 
\item If $\tau^\alpha$ $=$ $T$. Then $\bar\tau$ $=$ $T$ $=$ $\tau^{\alpha^\varepsilon}$, and $\alpha_t^\varepsilon$ converges to $\alpha_t$, for $0\leq t<T$, when  $\varepsilon$ goes to zero. It follows that $Q_T^{\alpha^\varepsilon}$ 
converges to $Q_T^\alpha$. Therefore, 
\begin{align}
\mathrm{PNL}^{\alpha^\varepsilon}_{\tau^{\alpha^\varepsilon}} &= \; 
B\Big(\frac{1}{T} \int_0^T S_s ds - S_T\Big) - \lambda(B-Q_T^{\alpha^\varepsilon})_+ \\
& \rightarrow \; B\Big(\frac{1}{T} \int_0^T S_s ds - S_T\Big) - \lambda(B-Q_T^{\alpha})_+ \; = \; 
\mathrm{PNL}^{\alpha}_{\bar\tau},
\end{align}
as $\varepsilon$ goes to zero. 
\end{itemize}
Moreover, by noting that $\big| \mathrm{PnL}_{\tau^{\alpha^\varepsilon}}^{\alpha^\varepsilon} \big|$ $\leq$ 
$B( 2\max_{t\in[0,T]} |S_t| + \lambda)$, and under \eqref{condS}, we can apply dominated convergence theorem to deduce that 
\begin{align}
\E\big[\mathrm{PNL}^{\alpha^\varepsilon}_{\tau^{\alpha^\varepsilon}} \big] & \rightarrow  \; 
                \E\big[\mathrm{PNL}^{\alpha}_{\bar\tau} \big], \quad \mbox{ when } \; \varepsilon \mbox{ goes to zero,}
\end{align}
and so $P_{BV}$ $\geq$ $\E\big[\mathrm{PNL}^{\alpha}_{\bar\tau} \big]$. 
Since this holds true for any $\alpha$ $\in$ $\Ac$, and $\bar\tau$ $\in$ $\Tc_{\tau^\alpha,T}$, we conclude that 
$P_{BV}$ $\geq$ $P_V$, hence the equality since it is clear that $P_V$ $\geq$ $P_{BV}$.   
\end{proof}

\section{PDE Implementation by  splitting scheme}  \label{PDE Implementation: Splitting scheme}

\no We solve the Bellman (HJB) equation \eqref{PDE3d} by backward induction. We know $P$ at $T$ 
(\textit{Terminal condition}). Now, we assume that we know $P$ at $t$ and we want to compute $P$ at a previous date $t-\Delta t$. We use the approximation:
\begin{equation}
  \overline{a}  1_{\left\{ (\gamma s\partial_s + s \partial_c + \partial_q)P(t-\Delta t,x) \geq 0 \right\}} \approx
  \overline{a}  1_{\left\{ (\gamma s\partial_s + s \partial_c + \partial_q)P(t,x) \geq 0 \right\}} := \tilde{a}^*(t,x)
\end{equation}
for all $x$ $=$ $(s,v,q,c)$ $\in$ $\mathcal{O}$ $=$ $\R_+^*\times\R_+^*\times(0,B)\times\R_+$. The HJB equation becomes
\begin{align} \label{linearHJBEqua}
    \partial_t P _{\mid_{\cal O}} +{\cal L} P_{\mid_{\cal O}} + {\cal D} P_{\mid_{\cal O}} = 0
\end{align}
where $P_{\mid_{\cal O}}$ is the restriction of $P$ to $\cal O$, $\mathcal{L}$ is a diffusion operator and $\mathcal{D}$ is a transport operator defined over $\mathcal{O}$ as
\begin{align}
    {\cal L}\cdot &= \frac{1}{2}\sigma^2 s^2 \partial^2_{ss}\cdot\\
    {\cal D}\cdot &= \frac{s-v}{t}\partial_v \cdot -  \tilde{a}^*(t,x) \partial_q \cdot
\end{align}
where $x=(s,v,q,c)\in\mathcal{O}$. One can verify that $\mathcal{L}$, $\mathcal{D}$ and $\mathcal{L}+\mathcal{D}$ generate a $C^0$ semi-groups, thus, the solution of \eqref{linearHJBEqua} at $t-\Delta t$ can be represented as
\begin{align}
    P(t-\Delta t, x) = e^{\Delta t (\mathcal{L}+\mathcal{D})} P(t,x)
\end{align}
where $e^{\Delta t (\mathcal{L}+\mathcal{D})}$ denotes the semi-group associated to the parabolic linear PDE \eqref{linearHJBEqua}. A first order approximation of the solution operator is obtained using Baker–Campbell–Hausdorff formula and Lie-Trotter splitting (see \cite{Trotter})
\begin{align} \label{splitting}
    e^{\Delta t (\mathcal{L}+\mathcal{D})} P(t,x) = e^{\Delta t \mathcal{D}}e^{\Delta t \mathcal{L}} P(t,x) + O(\Delta t)
\end{align}
One can also use Strang splitting $e^{\frac{\Delta t}{2} \mathcal{D}}e^{\Delta t \mathcal{L}}e^{\frac{\Delta t}{2} \mathcal{D}}$ to get a second order approximation. The splitting \eqref{splitting} corresponds to solving the parabolic PDE first with generator $\cal L$ and then the first-order transport PDE corresponding to the operator $\cal D$. By using the method of characteristics,  the solution corresponding to $\cal D$ is explicitly given by 
\begin{align}
  e^{\Delta t {\cal D}} Q(t,x) \; = \; Q(t,s,v+{s-v \over t}\Delta t,q + \tilde{a}^*(t,x) \Delta t, c + \tilde{a}^*(t,x) s \Delta t)   
\end{align}   
where $x=(s,v,q,c)\in\mathcal{O}$ and $Q(t,x)=e^{\Delta t \mathcal{L}}P(t,x)$. Finally, we extend $P(t,\cdot)$ to $\mathbb{R}^*_+\times \mathbb{R}^*_+\times \mathbb{R}\times \mathbb{R}$ using boundary conditions.

\bibliographystyle{plain}

\bibliography{biblioRL}

\end{document}